\newcommand{\tr}{\mbox{\textnormal{Tr}}}
\newcommand{\unif}[1]{\frac{\mbox{\textnormal{id}}_{#1}}{d_{#1}}}
\newtheorem{definition}{Definition}
\newtheorem{lemma}{Lemma}
\newtheorem{proposition}{Proposition}
\newtheorem{theorem}{Theorem}
\newtheorem{remark}{Remark}
\newtheorem{corollary}{Corollary}
\newcommand{\squareforqed}{$\square$}
\date{}
\begin{document}
\title{Recoverability from direct quantum correlations}
\author[1,2]{S. Di Giorgio}
\author[1,2]{P. Mateus}
\author[1,2]{B. Mera}

\affil[1]{Departamento de Matemática, Instituto Superior Técnico, Universidade de Lisboa, Av. Rovisco Pais, 1049-001 Lisboa, Portugal}
\affil[2]{Instituto de Telecomunica\c{c}\~oes, 1049-001 Lisboa, Portugal}

\maketitle
\begin{abstract}
We address the problem of compressing density operators defined on a finite dimensional 	Hilbert space which assumes a tensor product decomposition. In particular, we look for an \emph{efficient procedure} for learning the \emph{most likely} density operator, according to Jaynes' principle, given a \emph{chosen set} of partial information obtained from the unknown quantum system we wish to describe. For complexity reasons, we restrict our analysis to \emph{tree-structured} sets of bipartite marginals. We focus on the tripartite scenario, where we solve the problem for the couples of measured marginals which are compatible with a quantum Markov chain, providing then an algebraic necessary and sufficient condition for the compatibility to be verified. 
We introduce the generalization of the procedure to the n-partite scenario, giving some preliminary results. In particular, we prove that if the pairwise Markov condition holds between the subparts then the choice of the \emph{best} set of tree-structured bipartite marginals can be performed efficiently. Moreover, we provide a new characterisation of quantum Markov chains in terms of \emph{quantum Bayesian updating processes}. 
\end{abstract}
\section{Introduction}
\label{intro}

The problem of efficiently compressing density operators can be related to the one addressed by Jaynes~\cite{jay:57} for probability distributions. We are interested in determining an \emph{efficient procedure} for inferring the \emph{most likely density operator} from partial information about the system we wish to describe, with the freedom of choosing the \emph{set of partial information} to be collected.
With a complete set of measurements, quantum tomography techniques are able to infer with maximum accuracy the density operator that most likely describes the given quantum system, but the needed resources to perform the former increases exponentially with the number of degrees of freedom of the system. A clever choice of a partial set of measurements should optimize the data collection, lead up to an efficient learning procedure and keep a good accuracy. The necessity of dealing with a statistically relevant number of degrees of freedom motivates machine learning~\cite{bis:06} and quantum machine learning techniques~\cite{bia:wit:pan:reb:wie:llo:17,sch:sin:pet:15,car:tro:17}.\\

\paragraph{The maximum entropy estimator} In his seminal paper~\cite{jay:57}, Jaynes wrote:

``Information theory provides a constructive criterion for setting up probability distribution on the basis of partial knowledge and leads to a type of statistical inference which is called the maximum-entropy estimate. It is the least biased estimate possible on the given information, i.e. it is maximally noncommittal with regard to missing information.''

Since quantum information theory provides a well-defined generalization the Shannon entropy, the von Neumann entropy, it can be used to state a quantum Jaynes' principle and, therefore, to obtain a maximally noncommittal estimator for density operators with regard to the partial information collected. Moreover, due to the concavity of both Shannon and von Neumann entropy, the maximization problem has a unique solution and the desired estimator is uniquely determined. We choose therefore to infer from the given measurements, the density operators that maximizes the von Neumann entropy. 
\paragraph{Learning from direct correlations} A common approximation for multipartite physical system description consists in cutting the correlations after the first neighbour. Direct dependencies between random variables are usually less struggling to be measured and it results in a theoretical exponential gain in data collection. Consider a multipartite quantum system, described by a Hilbert space $\mathcal{H}_{X_1,...,X_n}=\mathcal{H}_{X_1}\otimes \dots \otimes \mathcal{H}_{X_n}$, with $\dim \mathcal{H}_{X_i}=\mbox{O}(d)$, for all $i=1,...,n$. To infer the state of the whole system, represented by a density matrix $\rho_{X_1,...,X_n}$, one needs an exponential amount of resources, more concretely, $\mbox{O}(d^{2n})$. However, if one restricts to bipartite correlations only, the amount of resources needed to approximately reproduce the state scales polynomially with $n$, $\mbox{O}(n^2 d^4)$. This approximation restricts the set of learnable states: for pure-multipartite correlated systems, such as the GHZ state, we expect not to be accurately recoverable.
\paragraph{Restriction to trees} Since density operators generalize classical probability distributions, finding the density operator that maximizes the von Neumann entropy given the complete set of bipartite correlations, would solve the analogous problem for classical probability distributions. To study an efficient learning procedure, the hardness results of the classical problem need to be taken in account.

Given a finite set of classical random variables, the optimization problem of finding the probability distribution that maximizes the Shannon entropy given a general collection of bipartite marginals, can be stated as a graph inference problem. With the provided information, the classical system can be represented by a graph where the vertices represent the random variables and the edges represent the direct dependencies between them. Then, graphical models, such as Bayesian Networks and Markov random fields~\cite{pea:11,bis:06}, provide the learning techniques for the maximum entropy estimator. Inferring a general graphical structure is NP-Hard~\cite{chi:96}, and so is finding an approximate solution~\cite{dag:lub:93}. The only structures for which a general efficient learning solution is known are trees, as even learning 2-polytrees is NP-Hard~\cite{das:99}. Nevertheless, there exists an efficient algorithm to obtain the \emph{optimal tree} -- the Chow-Liu algorithm~\cite{cho:liu:68}. Then, any set of random variables can be efficiently approximated by the probability distribution describing its most likely tree. It has been an open problem to find richer structures than tree Markov random fields that can be learned efficiently.

Motivated by these classical results, we look for an efficient learning procedure for the maximum von Neumann entropy estimator given a subset of bipartite marginals which is tree-structured. This means that, representing the joint quantum system by a graph where the vertices label the subsystems and where the edges represent the measured bipartite marginals, the resulting graph is a tree. Observe that the problem we are analyzing can be stated also as searching for a subclass of density operators for which the classical results of learning via graphical models can be extended.

The efficiency of the afforementioned learning techniques is mainly due to the factorization of the joint probability distribution that occurs when the conditional independence condition between the interested subparts holds. As we are going to see in detail, also when the system involves quantum correlations, the quantum generalization of conditional independence to quantum states results in an algebraic recovery of the joint in terms of the interested subparts.
However, whereas the graphical structure of a classical system naturally encodes the conditional independence properties between them, the one involving quantum correlations does not. Further conditions, not explicit from the graphical structure, need in general to be verified. Many attempts have been done for developing appropriate generalizations of graphical models for density operators~\cite{leif:spek:13,fit:jon:ved:vla:15}, but none of them naturally encodes the required properties which result in a learning simplification without further conditions~\cite{hor:dom:al.:17}.

\subsection*{Problem} We wish to find an \emph{efficient procedure} for learning the density operator that maximizes the von Neumann entropy from a subset of (compatible, cf. Definition~\ref{def:comp}) tree-structured bipartite marginals. 
\paragraph{Results}  We consider the simplest nontrivial tree, i.e. a tripartite quantum system where two marginals are known (Section~\ref{subsection: the tripartite case}). This analysis provides insight for the multipartite scenario (Section~\ref{subsection: the qcl algorithm}).

\emph{Tripartite case}: We provide an algebraic recovery procedure of the tripartite density operator given two bipartite marginals when they are compatible with a quantum Markov chain -- Definition~\ref{def:qmc}. We find that there exists an algebraic recovery procedure, namely the Petz recovery map, for the maximum entropy estimator whenever there exists a quantum Markov chain in the compatibility set of the provided marginals -- Theorem~\ref{th:1}. Indeed, Quantum Markov chains represent the subset of tripartite density operators that strictly contains classical probability distributions, i.e., such that between the two non adjacent quantum states, the (quantum) conditional independence condition holds -- Corollary~\ref{cor:1}. Then, we provide a necessary and sufficient condition for efficiently verifying if the given couple of marginals is compatible with a quantum Markov chain --Theorem~\ref{th:2}. 
Moreover, we give a criterion for determining which couple of possible bipartite marginals, out of the three possible, provides \emph{the best} maximum entropy estimator, meaning the one that minimizes the relative entropy distance with the \emph{unknown} density operator. We prove the best estimator to be the one with minimum von Neumann entropy -- Theorem~\ref{th:5}, which, if all the three couple of measured marginals are compatible with a quantum Markov chain, can be obtained by discarding the marginal with minimum quantum mutual information -- Theorem~\ref{th:6}.
Both the conditions of Theorem~\ref{th:2} and Theorem~\ref{th:6} are algebraic, allowing the efficiency of the entire learning procedure and, possibly, an easier generalization to the multipartite case.\\
Furthermore, observing that the problem of entropy maximization can be stated as multiple-step minimum entropy updating (Section~\ref{section: quantum bayesian updating}), we give a further characterization of quantum Markov chains as the commutativity of a diagram of quantum Bayesian updating processes -- Theorem~\ref{th:4}.

\emph{Multipartite case}: We provide some preliminary results about a possible generalization of the notion of the procedure to the multipartite scenario. In particular, we prove that the global Markov condition, properly extended to density operators, is sufficient to have an efficient recovery of the maximum entropy estimator given a tree structured set of bipartite marginals and, additionally, it is sufficient to efficiently choose an optimal tree for the estimator -- Proposition~\ref{prop:3}.

%%%%%%%%%%%%%%%%%%%%%%%%%%%%%%%%%%%%%%%%%%%%%%%%%%%%%%%%
\section{Maximum Entropy Estimator}
\label{section: maximum entropy estimator}
Let $\mathcal{X}=\{X_{1},...,X_{n}\}$, with $0<n<\infty$, be a set labelling the parts of a multipartite quantum system $\mathcal{X}$. The physical system $\mathcal{X}$ can be described by an Hermitian, positive-semidefinite and trace one operator $\rho_\mathcal{X}$, namely a density operator in the Liouville space $\mathcal{L}(\mathcal{H}_\mathcal{X})$, where $\mathcal{H}_{\mathcal{X}}$ is a separable Hilbert space on its subparts $\mathcal{H}_{\mathcal{X}}:=\bigotimes_{i=1}^{n}\mathcal{H}_{X_i}$. We denote  by $d_X$ the dimension of the Hilbert space $\mathcal{H}_X$, which is always assumed to be finite in the whole manuscript.
Running several times the same experiment, we can collect many copies of the unknown system; on each of them, we can perform a measurement, obtaining the set of expectation values $\{\langle\Theta_{i}\rangle\}_{i\in I}$, where $\Theta_i$ are positive Hermitian operators acting on the full joint Hilbert space $\mathcal{H}_{\mathcal{X}}$.

\begin{proposition}
\label{prop:1}
 The density operator $\widetilde{\rho}\in\mathcal{L}(\mathcal{H}_{\mathcal{X}})$ that maximizes the von Neumann entropy, denoted $S(\rho)$, is given by 
\begin{align}
\widetilde{\rho}_\mathcal{X}=\frac{1}{Z} \exp\Big(\sum_{i\in I} \lambda_i \Theta_i \Big), \textnormal{ with } Z=\tr\Big[\exp\Big(\sum_{i\in I} \lambda_i \Theta_i\Big)\Big]
\end{align}
and in which $\{\lambda_i\}_{i\in I}$ are Lagrange multipliers which are obtained by solving the equations $\tr \Big[\rho \,  \Theta_i\Big]=\langle \Theta_i\rangle$.
\end{proposition}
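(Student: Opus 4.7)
The plan is to derive the optimizer via constrained variational calculus on the Liouville space. I will set up the Lagrangian
\begin{equation*}
L(\rho, \mu, \{\lambda_i\}) = S(\rho) - \mu\bigl(\tr\rho - 1\bigr) + \sum_{i\in I}\lambda_i\bigl(\tr[\rho\,\Theta_i] - \langle\Theta_i\rangle\bigr),
\end{equation*}
where $\mu$ enforces normalization and each $\lambda_i$ enforces the $i$-th expectation-value constraint, and then look for stationary points. The positivity constraint $\rho\ge 0$ will not need a Lagrange multiplier: the critical point will turn out to be manifestly positive, so it lies in the interior of the feasible set.

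The core step is to compute the Fréchet derivative of $S(\rho) = -\tr(\rho\log\rho)$. For any Hermitian perturbation $X$, using Duhamel's formula for the derivative of the matrix logarithm together with the cyclicity of the trace, one obtains
\begin{equation*}
\frac{d}{dt}\Big|_{t=0}\tr\bigl[(\rho + tX)\log(\rho+tX)\bigr] = \tr\bigl[X(\log\rho + \mathbb{1})\bigr].
\end{equation*}
Together with the linearity of the other terms, stationarity of $L$ with respect to $\rho$ for arbitrary Hermitian $X$ then forces
\begin{equation*}
-\log\rho - \mathbb{1} - \mu\,\mathbb{1} + \sum_{i\in I}\lambda_i\,\Theta_i = 0.
\end{equation*}
Solving for $\rho$ gives $\rho = \exp\bigl(-(1+\mu)\mathbb{1} + \sum_i\lambda_i\Theta_i\bigr)$, and absorbing $e^{-(1+\mu)}$ into the normalization $Z$ (fixed by $\tr\rho=1$) yields exactly the stated Gibbs-type expression. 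The remaining Lagrange multipliers $\lambda_i$ are then determined implicitly by the nonlinear system $\tr[\rho\,\Theta_i]=\langle\Theta_i\rangle$.

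To conclude that this critical point is actually the global maximum, I will invoke strict concavity of the von Neumann entropy on the convex set of density operators together with the fact that the constraints are affine in $\rho$; any stationary point of $S$ on such a set is then the unique global maximizer. Existence of the Lagrange multipliers presupposes that the constraint set is nonempty and that some $\rho$ in its relative interior exists — this is the compatibility assumption built into the hypothesis.

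The main obstacle I anticipate is the careful justification of the derivative of $\tr(\rho\log\rho)$: one must either appeal to the Duhamel/integral representation of $d\log$ and use cyclicity to cancel the contributions arising from the non-commutativity of $\rho$ and $X$, or diagonalize $\rho$ and work in its eigenbasis (valid on the open set where $\rho>0$, which is where the optimum lives). Once this derivative identity is in hand, the rest of the argument is the standard Lagrange-multiplier computation transplanted from classical Jaynes' principle to the operator setting.
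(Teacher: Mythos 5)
Your proposal is correct and follows essentially the same route as the paper's proof, which is a one-line appeal to the variation of $S(\rho)-\sum_{i}\lambda_i(\tr[\rho\,\Theta_i]-\langle\Theta_i\rangle)$ followed by concavity for uniqueness; you simply make explicit the normalization multiplier, the Fréchet derivative $\tr[X(\log\rho+\mathbb{1})]$, and the resulting Gibbs form. The extra care about differentiating $\tr(\rho\log\rho)$ and about strict concavity on the affine constraint set is a faithful expansion of what the paper leaves implicit.
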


\begin{proof} This follows from taking the variation of the function
\begin{align}
S(\rho) -\sum_{i\in I}\lambda_i\big(\tr \Big[\rho \, \Theta_i\Big]-\langle \Theta_i\rangle \big).
\end{align}
By concavity of $S$ the solution is a maximum point and it is unique.
\end{proof}
In our case, we are given a set of bipartite marginals $\{\rho_{X_{i}X_{j}}\}$, which can be probed by a complete set of observables in the associated bipartite Hilbert spaces. If we are given a Hilbert space $\mathcal{H}_{X}$, then there exist a set of Hermitian operators $\{\Lambda_j^{(X)}: j=0,...,d_X^2-1\}$ which are complete in the sense that any linear operator and, in particular any observable, can be written as a linear combination of the latter. This basis of operators can be chosen to be orthonormal with respect to the Hilbert-Schmidt inner product and, additionally, to be traceless, so that $\{i\Lambda_{j}^{(X)}:j=0,...,d_X^2-1\}$ span a Lie algebra $\mathfrak{su}(d_X)$, where $\Lambda_{0}^{(X)}:=\mbox{id}_{X}$ is the identity on $\mathcal{H}_X$ and it corresponds to the remaining generator of $\mathfrak{u}(d_X)$. For each $i\in I$, we denote by $\{\Lambda^{(X_i)}_{k}:k=0,..,d_{X_{i}}^2-1\}$ a chosen complete set of observables for $X_i$, as before. For each $X_iX_j$, the set $\{\Lambda^{(X_i)}_{k}\otimes \Lambda^{(X_j)}_{l}: k=0,...,d_{X_{i}}^2-1,\ l=0,...,d_{X_{j}}^2-1\}$ forms a complete set of observables for the bipartite system $X_iX_j$. We extend the operators $\Lambda^{(X_i)}_{k}\otimes \Lambda^{(X_j)}_{l}$ acting on $\mathcal{H}_{X_i}\otimes \mathcal{H}_{X_j}$ in a natural way to act on the joint Hilbert space $\mathcal{H}_{\mathcal{X}}$ by taking the tensor product with the identity on the relevant factors. By abuse of notation we denote by $\Lambda^{(X_i)}_{k}\Lambda^{(X_j)}_{l}$ the extended operators. As a consequence of Prop.~\ref{prop:1}, the density operator $\widetilde{\rho}_{X_1 ... X_n}$ maximizing the von Neumann entropy assumes the form
\begin{align}
\widetilde{\rho}_{\mathcal{X}}=\frac{1}{Z}\exp\left(\sum_{i,j}\underset{k^2+l^2\neq 0}{\sum_{k=0}^{d_{X_i}^2-1}\sum_{l=0}^{d_{X_j}^2-1}}\lambda^{(X_iX_j)}_{kl}\Lambda^{(X_i)}_{k}\Lambda^{(X_j)}_{l} \right),
\label{eq: merho for 2}
\end{align}
where, as before, $\{\lambda_{jk}^{(X_iX_j)}\}$ are the Lagrange multipliers of the optimization problem, constrained by the partial traces on the given marginals. The state of Eq.~\eqref{eq: merho for 2} exists because the set of density operators satisfying the given constraints is convex and non-empty. Indeed, \emph{a priori}, the set of bipartite marginals is the output of a set of measurements performed on an existing quantum system described by an $n$-partite density operator. We are than assuming the given set of provided marginals to be \emph{compatible}.
\begin{definition}
\label{def:comp}
(Compatibility and Compatibility set) 
Consider a set of density operators $\mathcal{C}=\{\rho_{\mathcal{Y}}\in\mathcal{L}(\mathcal{H}_{\mathcal{Y}}) \}_{\mathcal{Y}\in\mathcal{K}}$, where $\mathcal{K}$ is a family of subsystems of $\mathcal{X}$ which is a cover, i.e., $\bigcup_{\mathcal{Y}\in\mathcal{K}}\mathcal{Y}=\mathcal{X}$. We say that $\mathcal{C}$ is a compatible set of marginals if there exists at least one density operator $\rho$ over the joint Hilbert space $\mathcal{H}_{\mathcal{X}}$ such that $\tr_{\overline{\mathcal{Y}}}(\rho)=\rho_{\mathcal{Y}}$ for all $\rho_{\mathcal{Y}}\in \mathcal{C}$. Here $\tr_{\overline{\mathcal{Y}}}(\cdot)$ denotes the partial trace over the complementary factors of the Hilbert space $\bar{\mathcal{Y}}=\mathcal{X}\backslash\mathcal{Y}$.\\
Moreover, we denote by \textnormal{Comp}$(\mathcal{C})$ the set of density operators over $\mathcal{H}_{\mathcal{X}}$ s.t. $\tr_{\overline{\mathcal{Y}}}(\rho)=\rho_{\mathcal{Y}}$ for every $\rho_{\mathcal{Y}}\in\mathcal{C}$. We say for each $\rho\in$ \textnormal{Comp}$(\mathcal{C})$ that $\rho$ is compatible  with $\rho_{\mathcal{Y}}$, for any $\mathcal{Y}\in\mathcal{K}$.  Additionally, we also say that $\rho$ is compatible with $\mathcal{C}$.
\end{definition}
The problem of determining if a given set of density operators is compatible is known as \emph{the quantum marginal problem}  is  QMA complete~\cite{liu:06,liu:chr:ver:07}. This problem reduces to determining the maximum entropy estimator (compatible with the marginals), and therefore, the latter is QMA hard.
In particular, a necessary condition, also sufficient for classical probability distributions, for a set of density operators defined on overlapping Hilbert spaces to be compatible is to coincide on their intersection. Formally, given $\rho_{\mathcal{Y}_1},\rho_{\mathcal{Y}_2}\in \mathcal{C}$ compatible, if $\mathcal{Y}_1\cap \mathcal{Y}_2\neq \emptyset$ then  $\tr_{\overline{\mathcal{Y}_1\cap \mathcal{Y}_2}}(\rho_{\mathcal{Y}_1})=\tr_{\overline{\mathcal{Y}_1\cap \mathcal{Y}_2}}(\rho_{\mathcal{Y}_2})$. 

Our goal, as stated in the introduction, is to have an efficient recovery of the maximum entropy estimator given a set of bipartite marginals. For that reason and taking into account the results known for the classical case (see the introduction), we restrict to the case where the set of bipartite marginals is tree structured. In the next subsection we consider the tripartite case in detail.
\begin{remark}
Also in the tripartite case, we are going to restrict the problem to trees where just two marginals out of the three possible are taken in account. This not only to gain some insight about the generalization to the multipartite scenario, but also since the recovery problem for a tripartite probability distribution given all the three possible bipartite marginals is open \cite{hall:07,che:git:mod:pia:14,son:jaa:08}. Moreover, moving to the quantum scenario, also the compatibility problem for just a couple of overlapping marginals is open~\cite{tyc:vla:15,hub:2018}. We are then going to assume the set of the two given marginal density operators compatible.
\end{remark}
%%%%%%%%%%%%%%%%%%%%%%%%%%%%%%%%%%%%%%%%%%%%%%%%%%%%%%%%
\subsection{The tripartite case}
\label{subsection: the tripartite case}
Let us denote $\mathcal{X}=\{A,B,C\}$, and assume we are given access to marginals $\{\rho_{AB},\rho_{BC}\}$. See the associated graph in Fig.~\ref{fig:1}.
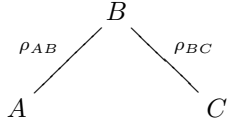
\begin{figure}[h!]
\begin{displaymath}
    \xymatrix{ & B \ar@{-}[dl]_{\rho_{AB}} \ar@{-}[dr]^{\rho_{BC}} & \\
              A & & C }
\end{displaymath}
\caption{Graph associated to system of bipartite marginals $\{\rho_{AB},\rho_{BC}\}$. Each vertex represents a quantum system to which is associated a density operator by partial tracing the marginals. The edges correspond to direct correlations between the vertices, associated to mixed states defined over the edge Hilbert space.}
\label{fig:1}
\end{figure}
In Eq.~\eqref{eq: merho for 2}, we denote by $\lambda_{kl}:=\lambda_{kl}^{(AB)}$ and $\eta_{kl}:=\lambda_{kl}^{(BC)}$. We then have,
\begin{align}
\label{eq:4}
\widetilde{\rho}_{ABC}=	\frac{1}{Z}\exp\left(\underset{k^2+l^2\neq 0}{\sum_{k=0}^{d_{A}^2-1}\sum_{l=0}^{d_{B}^2-1}}\lambda_{kl}\Lambda_{k}^{A}\Lambda_{l}^{B}+\underset{k^2+l^2\neq 0}{\sum_{k=0}^{d_{B}^2-1}\sum_{l=0}^{d_{C}^2-1}}\eta_{kl}\Lambda_{k}^{B}\Lambda_{l}^{C}\right).
\end{align}
Because for each $X_iX_j$, the set $\{\Lambda^{(X_i)}_{k}\otimes \Lambda^{(X_j)}_{l}: k=0,...,d_{i}^2-1,\ l=0,...,d_{j}^2-1\}$ forms a complete set of observables for the bipartite system $X_iX_j$, we can write,
\begin{align}
\rho_{AB}=\frac{1}{d_A d_B}\mbox{id}_{AB}+	\underset{k^2+l^2\neq 0}{\sum_{k=0}^{d_{A}^2-1}\sum_{l=0}^{d_{B}^2-1}}\alpha_{kl}\Lambda_{k}^{(A)}\Lambda_{l}^{(B)},\quad \rho_{BC}=\frac{1}{d_B d_C}\mbox{id}_{BC}+	\underset{k^2+l^2\neq 0}{\sum_{k=0}^{d_{A}^2-1}\sum_{l=0}^{d_{B}^2-1}}\beta_{kl}\Lambda_{k}^{(B)}\Lambda_{l}^{(C)},
\end{align}
where $\alpha_{kl}, \beta_{kl}\in\mathbb{R}$. Additionally, the constraints imposed by the marginals can be cast in the form
\begin{align}
\label{eq:7}
\tr\left[\Lambda_{k}^{(A)}\Lambda_{l}^{(B)}\left(\widetilde{\rho}_{ABC}-\rho_{AB}\otimes\unif{C} \right)\right]=0,\quad \tr\left[\Lambda_{k}^{(B)}\Lambda_{l}^{(C)}\left(\widetilde{\rho}_{ABC}-\unif{A}\otimes\rho_{BC} \right)\right]=0,
\end{align}
where $k$ and $l$ range in the dimensions of the space of linear operators of the appropriate subsystem.

From the partition function $Z$, one can obtain, by differentiation, a system of equations which allow to solve for the Lagrange multipliers in terms of the known parameters $\{\alpha_{kl}\}$ and $\{\beta_{kl}\}$. This is a standard procedure in statistical mechanics. Determining if the optimization problem described above has an algebraic solution is not trivial: the Lie operators in the exponent of Eq.~\eqref{eq:4} in general do not commute, which makes the analysis of the constraints imposed in Eq.~\eqref{eq:7} mathematically hard to manage. Many techniques for overcoming similar problems are object of study, see for example Ref.~\cite{sut:18} for trace inequalities or Refs.~\cite{kim:12,rus:13} for operator inequalities. Instead, here we focus our attention on a well-behaved subset of density operators -- quantum Markov chains~\cite{faw:ren:15}. In the next section, we recall the definition of a quantum Markov chain and prove a necessary and sufficient condition for a given pair of bipartite marginals to be compatible with a quantum Markov chain. 
%%%%%%%%%%%%%%%%%%%%%%%%%%%%%%%%%%%%%%%%%%%%%%%%%%%%%%%%
\section{Marginals' compatibility with quantum Markov chains}
\label{section: marginals' compatibility with qmc}
\begin{definition}\label{def:qmc}
 (Quantum Markov chain~\cite{sut:18}) A tripartite state $\rho_{ABC}$ over $\mathcal{H}_{ABC}$ is called a quantum Markov chain (QMC) in order $A-B-C$ if there exists a recovery map $\mathcal{R}_{B\to BC}:\mathcal{L}(\mathcal{H}_{B})\to \mathcal{L}(\mathcal{H}_{BC})$ such that
\begin{align}
\rho_{ABC}=\left(\mathcal{I}_{A}\otimes\mathcal{R}_{B\to BC}\right)(\rho_{AB}),
\end{align} 
where a recovery map is an arbitrary trace-preserving completely positive (CPTP) map, see Ref.~\cite{bis:06}, and $\mathcal{I}_{A}$ denotes the identity map on $\mathcal{L}(\mathcal{H}_A)$.
\end{definition}

A QMC can be characterized from an information-theoretical point of view due to the following result.

\begin{proposition}
\label{prop:2}
\cite{hay:joz:pet:win:04}
A tripartite state $\rho_{ABC}$ is a QMC in the order $A-B-C$ if an only if $ I_{\rho}(A:C|B)=0$, where $I_{\rho}(A:C|B):=S(\rho_{AB})+S(\rho_{BC})-S(\rho_{B})-S(\rho_{ABC})$ is the quantum conditional mutual information.
\end{proposition}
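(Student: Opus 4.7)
The plan is to treat the two directions separately, relying on strong subadditivity (SSA) of the von Neumann entropy and on the structure of states that saturate it. A convenient preliminary reformulation is to write $I_\rho(A:C|B) = I(A{:}BC)_\rho - I(A{:}B)_{\rho_{AB}}$, which follows by expanding the definition; by SSA this quantity is non-negative, so the question reduces to characterizing when the data-processing inequality for mutual information is tight under $\mathcal{I}_A \otimes \mathcal{R}_{B\to BC}$.

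For the easier direction, assume $\rho_{ABC}$ is a QMC, i.e., $\rho_{ABC} = (\mathcal{I}_A \otimes \mathcal{R}_{B\to BC})(\rho_{AB})$. Since $\mathcal{R}_{B\to BC}$ is CPTP, data processing gives $I(A{:}BC)_{\rho_{ABC}} \leq I(A{:}B)_{\rho_{AB}}$. In the opposite direction, partial trace over $C$ is also CPTP, and the QMC condition forces $\tr_C \circ \mathcal{R}_{B\to BC}$ to act as the identity on $\rho_B$ (so that $\tr_C \rho_{ABC} = \rho_{AB}$); data processing applied to $\tr_C$ then yields $I(A{:}B)_{\rho_{AB}} \leq I(A{:}BC)_{\rho_{ABC}}$. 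Combining the two inequalities gives equality and hence $I_\rho(A{:}C|B)=0$.

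For the converse, the plan is to invoke the Hayden--Jozsa--Petz--Winter structure theorem for equality in SSA: saturation $I_\rho(A{:}C|B)=0$ forces a block decomposition
\begin{align}
\mathcal{H}_B = \bigoplus_k \mathcal{H}_{b_k^L} \otimes \mathcal{H}_{b_k^R}, \qquad \rho_{ABC} = \bigoplus_k p_k\, \rho^{(k)}_{A\, b_k^L} \otimes \rho^{(k)}_{b_k^R\, C}.
\end{align}
From such a decomposition one constructs the recovery map explicitly via the Petz recipe
\begin{align}
\mathcal{R}_{B\to BC}(\sigma_B) = \rho_{BC}^{1/2}\bigl(\rho_B^{-1/2}\sigma_B\rho_B^{-1/2}\otimes \mbox{id}_C\bigr)\rho_{BC}^{1/2},
\end{align}
and verifies block-by-block that $(\mathcal{I}_A \otimes \mathcal{R}_{B\to BC})(\rho_{AB}) = \rho_{ABC}$: inside block $k$ the action reduces to projecting onto $\mathcal{H}_{b_k^L}\otimes \mathcal{H}_{b_k^R}$, leaving the $A\,b_k^L$ factor untouched, and appending $\rho^{(k)}_{b_k^R C}/\tr(\rho^{(k)}_{b_k^R})$ on the right factor.

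The main obstacle, and the only substantial analytic step, is the structure theorem itself: proving it requires a careful analysis of the equality conditions in Lieb's concavity theorem (the operator-convexity input behind SSA), and is precisely where the cited reference does the heavy lifting. I would therefore import it as a black box rather than reprove it, leaving the remainder of the argument as data-processing plus a routine calculation on each block of the decomposition.
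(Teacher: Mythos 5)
The paper offers no proof of this proposition at all: it is imported verbatim from Ref.~\cite{hay:joz:pet:win:04}, so there is nothing internal to compare against. Your reconstruction is correct and is the standard argument. The forward direction works exactly as you say: with $I_\rho(A:C|B)=I(A:BC)-I(A:B)$, data processing under $\mathcal{I}_A\otimes\mathcal{R}_{B\to BC}$ gives one inequality and data processing under $\tr_C$ gives the reverse; the only cosmetic point is that nothing needs to be ``forced'' about $\tr_C\circ\mathcal{R}_{B\to BC}$, since $\rho_{AB}:=\tr_C\rho_{ABC}$ holds by definition of the marginal. For the converse, the structure theorem you invoke is precisely the paper's Theorem~\ref{th:3} (cited from the same reference), and the block-by-block check that the Petz map returns $\rho_{ABC}$ goes through: in block $k$ the $A b_k^L$ factor is sandwiched by support projections and left intact, the $b_k^R$ factor is replaced by $\rho^{(k)}_{b_k^R C}$, and the weights $p_k$ survive, with the usual convention that $\rho_B^{-1/2}$ denotes the inverse on the support. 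The one logical point worth making explicit is non-circularity: the structure theorem must be (and in Ref.~\cite{hay:joz:pet:win:04} is) derived from the equality conditions in SSA via Lieb's theorem, not from recoverability, so importing it as a black box is legitimate. Net effect: your write-up is strictly more self-contained than the paper, which treats the whole equivalence as a citation, and it uses only tools (Theorem~\ref{th:3} and the Petz map) that the paper already has on the table.
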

We recall that in case of classical random variables with finite domains, when we are given two bipartite marginals $p(A,B)$ and $p(B,C)$, the compatibility condition $\sum_ap(A=a,B=b)=\sum_cp(B=b,C=c)$ is necessary and sufficient for $A$ and $C$ to be independent conditioned on $B$, i.e. $I(A: C|B)=0$. Then, the set of quantum Markov chains includes the set of classical tripartite probability distributions.

Proposition~\ref{prop:2} is equivalent to the statement: a QMC $A-B-C$ is a tripartite quantum state for which the strong subadditivity of the von Neumann entropy,
\begin{align}
S(\rho_{AB})+S(\rho_{BC})\geq S(\rho_{B})+S(\rho_{ABC}),
\end{align}
holds with equality~\cite{lie:rus:73}, which trivially results in the fact that a QMC $A-B-C$ maximizes the von Neumann entropy given its two bipartite marginals $AB$ and $BC$. Furthermore, the quantum systems $A$ and $C$ are said to be \emph{quantum conditionally independent} given the quantum system $B$.

A QMC always admits as a recovery channel the \emph{rotated Petz recovery map}~\cite{pet:86,wil:15}:
\begin{align}
\mathcal{P}_{B\to BC}^{t}(X):=\rho_{BC}^{\frac{1+it}{2}}\left(\rho_B^{-\frac{1+it}{2}}X \rho_B^{-\frac{1-it}{2}}\right)\rho_{BC}^{\frac{1-it}{2}}, \text{ for any } X\in \mathcal{L}(\mathcal{H}_{B}),\ t\in\mathbb{R}.
\end{align}
In particular, for $t=0$, the map is known as the \emph{Petz recovery map or transpose map}:
\begin{align}
\mathcal{P}_{B\to BC}(X):=\rho_{BC}^{\frac{1}{2}}\rho_B^{-\frac{1}{2}}X \rho_B^{-\frac{1}{2}}\rho_{BC}^{\frac{1}{2}}, \text{ for any } X\in \mathcal{L}(\mathcal{H}_{B}).
\end{align}
Note that in the previous formulas, $\rho_{B}$ and $X$ are understood as elements of $\mathcal{L}(\mathcal{H}_{BC})$ by extending it in the natural way, i.e., $\rho_{B}\otimes \mbox{id}_{C}$. In Ref.~\cite{sut:18} it is shown that the Petz recovery map is indeed a recovery map, i.e., a CPTP map.

Our first result comes as a natural corollary of the previous stated results.

\begin{theorem}
\label{th:1}
Given bipartite marginals $\{\rho_{AB},\rho_{BC}\}$ compatible with a QMC in the order A-B-C, say $\rho_{ABC}$, then the solution of the maximum entropy estimator $\widetilde{\rho}_{ABC}$ is precisely equal to $\rho_{ABC}$. Moreover, $\widetilde{\rho}_{ABC}$ can be algebraically recovered via the Petz map $\mathcal{P}_{B\to BC}(\cdot)$, concretely:
\begin{align}
\widetilde{\rho}_{ABC}=\rho_{BC}^{\frac{1}{2}}\rho_B^{-\frac{1}{2}}\rho_{AB} \rho_B^{-\frac{1}{2}}\rho_{BC}^{\frac{1}{2}}.
\end{align}
\end{theorem}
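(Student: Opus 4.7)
The plan is to reduce the theorem to the equality case of strong subadditivity, which is exactly the defining property of a QMC given by Proposition~\ref{prop:2}. The only two ingredients I need are an upper bound on $S(\rho)$ valid for every $\rho\in\textnormal{Comp}(\{\rho_{AB},\rho_{BC}\})$, and the uniqueness of the maximum entropy estimator established in Proposition~\ref{prop:1}.

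First, I would note that for any $\rho\in\textnormal{Comp}(\{\rho_{AB},\rho_{BC}\})$, strong subadditivity of the von Neumann entropy gives
\begin{align}
S(\rho)\;\leq\; S(\rho_{AB})+S(\rho_{BC})-S(\rho_{B}),
\end{align}
and crucially the right-hand side is a constant determined entirely by the two given marginals (recall that compatibility forces $\tr_{A}\rho_{AB}=\tr_{C}\rho_{BC}=\rho_{B}$). Hence the supremum of $S$ over the compatibility set is bounded above by this marginal-dependent quantity. Next I invoke the hypothesis: by assumption there is a QMC $\rho_{ABC}\in\textnormal{Comp}(\{\rho_{AB},\rho_{BC}\})$ in the order $A$--$B$--$C$, and by Proposition~\ref{prop:2} such a state satisfies $I_{\rho}(A{:}C|B)=0$, i.e.\ it saturates the above inequality. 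Therefore $\rho_{ABC}$ attains the upper bound and is a maximizer of $S$ on $\textnormal{Comp}(\{\rho_{AB},\rho_{BC}\})$. By the uniqueness of the maximizer (Proposition~\ref{prop:1}, which in turn follows from strict concavity of $S$ on the interior of the state space, plus the convexity of the constraint set), we conclude $\widetilde{\rho}_{ABC}=\rho_{ABC}$.

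For the algebraic recovery formula, I would simply combine Definition~\ref{def:qmc} with the fact, recalled just above the theorem, that a QMC always admits the Petz map $\mathcal{P}_{B\to BC}$ as a recovery map. Applying $\mathcal{I}_{A}\otimes\mathcal{P}_{B\to BC}$ to $\rho_{AB}$ (viewed as an operator on $\mathcal{H}_{ABC}$ via trivial extension on $C$, and with $\rho_{B}^{\pm 1/2}$ and $\rho_{BC}^{1/2}$ similarly extended with identities on the factors they do not touch) produces exactly the stated expression, and by the previous paragraph this equals $\widetilde{\rho}_{ABC}$.

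I do not expect a genuine obstacle here, since the result is essentially a repackaging of the equality case of strong subadditivity together with uniqueness of the maximum entropy estimator. The one technical point worth being careful about is the proper extension of $\rho_{B}$, $\rho_{BC}$ and $\rho_{AB}$ to $\mathcal{L}(\mathcal{H}_{ABC})$ so that the Petz formula makes literal sense as an operator identity on the joint space, and the implicit well-definedness of the inverse $\rho_{B}^{-1/2}$ on the support of $\rho_{B}$ (outside of which the formula is interpreted via the Moore--Penrose pseudoinverse, as is standard in the Petz-recovery literature cited in the paper).
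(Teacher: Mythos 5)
Your argument is correct and is essentially the route the paper takes: the paper presents Theorem~\ref{th:1} as an immediate corollary of the preceding discussion, namely that a QMC saturates strong subadditivity (whose right-hand side depends only on the given marginals, so the QMC maximizes $S$ over the compatibility set), uniqueness from concavity as in Proposition~\ref{prop:1}, and the fact that the Petz map is a valid recovery map for a QMC. Your write-up just makes these steps explicit, including the sensible caveat about extending operators to $\mathcal{H}_{ABC}$ and handling $\rho_B^{-1/2}$ on the support.
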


\begin{remark}We can also recover a tripartite density operator from $\rho_{BC}$ through $\mathcal{P}_{B\to AB}(\cdot)$:
\begin{align}
\rho_{AB}^{\frac{1}{2}}\rho_B^{-\frac{1}{2}}\rho_{BC} \rho_B^{-\frac{1}{2}}\rho_{AB}^{\frac{1}{2}},
\end{align}
and by uniqueness, because the von Neumann entropy is concave, they are the same.	
\end{remark}

In alternative to the Petz recovery map, in presence of marginals compatible with a QMC, we can solve efficiently the associated optimization problem, i.e., determine the Lagrange multipliers in Eq.~\eqref{eq:4} due to the following result by Petz~\cite{pet:03}:
\begin{lemma}
\label{lemma:1}
\cite{pet:03}
Assume that $\rho_{ABC}$ is invertible. Then the equality holds in the strong subadditivity inequality (SSA) if and only if $\log\rho_{ABC}-\log\rho_{AB}=\log\rho_{BC}-\log\rho_B$.
\end{lemma}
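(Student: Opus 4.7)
The plan is to treat the two implications separately, with the easy ``if'' direction being a short trace identity and the nontrivial ``only if'' direction relying on Lieb's concavity theorem together with the Gibbs variational principle. Throughout, we interpret $\log\rho_{AB}$, $\log\rho_{BC}$, $\log\rho_B$ as operators on $\mathcal{H}_{ABC}$ via the natural embedding (tensoring with identities on the missing factors), which is permissible because the marginals are invertible whenever $\rho_{ABC}$ is.

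For the ``if'' direction, I would multiply the identity $\log\rho_{ABC} - \log\rho_{AB} = \log\rho_{BC} - \log\rho_B$ by $\rho_{ABC}$ and take the trace. Using $\tr(\rho_{ABC}\log\rho_{XY}) = \tr(\rho_{XY}\log\rho_{XY}) = -S(\rho_{XY})$ for each of $XY \in \{AB, BC, B\}$ (a consequence of the definition of partial trace), the resulting scalar identity is precisely $S(\rho_{AB}) + S(\rho_{BC}) = S(\rho_B) + S(\rho_{ABC})$, i.e., saturation of SSA.

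For the ``only if'' direction, I would set $K := \log\rho_{AB} + \log\rho_{BC} - \log\rho_B$ on $\mathcal{H}_{ABC}$ and combine two ingredients. First, Lieb's concavity theorem yields the inequality $\tr\exp(K) \leq 1$, one of the standard ``integral'' reformulations of SSA. Second, the Gibbs variational principle gives, for any density matrix $\rho$ and any Hermitian $K$,
\begin{equation}
S(\rho) + \tr(\rho K) \leq \log\tr\exp(K),
\end{equation}
with equality if and only if $\rho = \exp(K)/\tr\exp(K)$. A direct expansion yields $S(\rho_{ABC}) + \tr(\rho_{ABC} K) = -I_\rho(A:C|B)$, which vanishes under the SSA-equality hypothesis by Proposition~\ref{prop:2}. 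Together with $\log\tr\exp(K) \leq 0$, the Gibbs bound is then saturated, forcing both $\tr\exp(K) = 1$ and $\rho_{ABC} = \exp(K)$. Invertibility of $\rho_{ABC}$ finally permits taking logarithms on both sides to recover the desired operator identity $\log\rho_{ABC} - \log\rho_{AB} = \log\rho_{BC} - \log\rho_B$.

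The main obstacle is the auxiliary inequality $\tr\exp(\log\rho_{AB} + \log\rho_{BC} - \log\rho_B) \leq 1$: proving it from first principles requires Lieb's triple matrix concavity theorem or one of its equivalent integral representations, and in this exposition it should be cited (for instance from the Lieb--Ruskai proof of SSA, or from Ref.~\cite{sut:18}) rather than re-derived. The invertibility hypothesis on $\rho_{ABC}$ is essential for the last step, ensuring that the operator equation $\rho_{ABC} = \exp(K)$ translates cleanly into the logarithmic identity claimed; in the non-invertible case one would instead have to analyse supports and kernels separately, which is exactly the subtlety that complicates more general formulations of SSA-equality.
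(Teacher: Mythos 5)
Your argument is correct, but note that the paper itself offers no proof of Lemma~\ref{lemma:1}: it is imported verbatim from Petz's work via the citation \cite{pet:03}, so there is no in-text derivation to compare against. What you have written is essentially the standard equality-condition argument (due to Ruskai and Petz): the ``if'' direction is the trace identity you describe, using $\tr(\rho_{ABC}\log\rho_{XY})=\tr(\rho_{XY}\log\rho_{XY})$ for each marginal, and the ``only if'' direction combines the Lieb triple-matrix bound $\tr\exp(\log\rho_{AB}+\log\rho_{BC}-\log\rho_B)\leq 1$ with the Gibbs variational principle (equivalently, faithfulness of the relative entropy $S(\rho\|e^{K}/\tr e^{K})\geq 0$) to force $\rho_{ABC}=e^{K}$ when $I_{\rho}(A:C|B)=0$. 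All the intermediate computations check out: $S(\rho_{ABC})+\tr(\rho_{ABC}K)=-I_{\rho}(A:C|B)$ is exactly the conditional mutual information of Proposition~\ref{prop:2}, and the normalization $\tr e^{K}\leq 1$ does follow from Lieb's integral representation after tracing out $A$ and then $C$ (the integrand traces to $\tr[\rho_B^2(\rho_B+t)^{-2}]$, whose integral is $1$). Your decision to cite rather than re-derive the Lieb inequality is appropriate — it is the one genuinely deep input, and it is strictly stronger than SSA itself, so there is no circularity in invoking it. You are also right to flag invertibility as what licenses both the embedding of the marginal logarithms into $\mathcal{L}(\mathcal{H}_{ABC})$ and the final passage from $\rho_{ABC}=e^{K}$ to the logarithmic identity; in the rank-deficient case the statement must be reformulated on supports, which is why the paper (and Petz) assume invertibility.
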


Our next result, provides a necessary and sufficient condition for the given marginals $\{\rho_{AB},\rho_{BC}\}$ to be compatible with a QMC. The condition is moreover algebraic, resulting in being easily verifiable.
\begin{theorem}
\label{th:2}
Two bipartite marginals $\{\rho_{AB},\rho_{BC}\}$ are compatible with a QMC in $\mathcal{L}\left(\mathcal{H}_{ABC}\right)$ in the order $A-B-C$ if and only if $\tr_A(\rho_{AB})=\tr_C(\rho_{BC})$ and the operator $\Theta_{ABC}=\rho_{BC}^\frac{1}{2}\rho_{B}^{-\frac{1}{2}}\rho_{AB}^\frac{1}{2}$ is normal.
\end{theorem}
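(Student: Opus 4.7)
The plan rests on the observation that Theorem~\ref{th:1} supplies the Petz reconstruction $\Theta_{ABC}\Theta_{ABC}^{\dagger}$ of any QMC in the order $A$-$B$-$C$, while the symmetry of the quantum conditional mutual information in Proposition~\ref{prop:2} gives a second Petz reconstruction $\Theta_{ABC}^{\dagger}\Theta_{ABC}$ along the order $C$-$B$-$A$. Equality of the two is precisely normality of $\Theta_{ABC}$, and this is the algebraic content to be extracted.

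For the necessary direction, suppose $\{\rho_{AB},\rho_{BC}\}$ are compatible with a QMC $\rho_{ABC}$ in the order $A$-$B$-$C$. Consistency $\tr_A(\rho_{AB})=\tr_C(\rho_{BC})=\rho_{B}$ is automatic, since both sides arise as iterated partial traces of $\rho_{ABC}$. Theorem~\ref{th:1} yields $\rho_{ABC}=\rho_{BC}^{\frac{1}{2}}\rho_B^{-\frac{1}{2}}\rho_{AB}\rho_B^{-\frac{1}{2}}\rho_{BC}^{\frac{1}{2}}=\Theta_{ABC}\Theta_{ABC}^{\dagger}$. Because $I_{\rho}(A:C|B)$ is manifestly symmetric in $A$ and $C$, Proposition~\ref{prop:2} implies that $\rho_{ABC}$ is equally a QMC in the order $C$-$B$-$A$, so Theorem~\ref{th:1} applied with $A$ and $C$ interchanged (the content of the remark following Theorem~\ref{th:1}) also gives $\rho_{ABC}=\rho_{AB}^{\frac{1}{2}}\rho_B^{-\frac{1}{2}}\rho_{BC}\rho_B^{-\frac{1}{2}}\rho_{AB}^{\frac{1}{2}}=\Theta_{ABC}^{\dagger}\Theta_{ABC}$. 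Equating the two expressions forces $\Theta_{ABC}\Theta_{ABC}^{\dagger}=\Theta_{ABC}^{\dagger}\Theta_{ABC}$, i.e.\ $\Theta_{ABC}$ is normal.

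For the sufficient direction, assume consistency and normality of $\Theta_{ABC}$, and \emph{define} $\rho_{ABC}:=\Theta_{ABC}\Theta_{ABC}^{\dagger}=\rho_{BC}^{\frac{1}{2}}\rho_B^{-\frac{1}{2}}\rho_{AB}\rho_B^{-\frac{1}{2}}\rho_{BC}^{\frac{1}{2}}$. It is manifestly Hermitian and positive semidefinite, and by normality it also equals $\Theta_{ABC}^{\dagger}\Theta_{ABC}=\rho_{AB}^{\frac{1}{2}}\rho_B^{-\frac{1}{2}}\rho_{BC}\rho_B^{-\frac{1}{2}}\rho_{AB}^{\frac{1}{2}}$. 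The two marginal identities are then obtained by picking, for each partial trace, the form in which the outer factors act trivially on the subsystem being traced out: from $\Theta_{ABC}\Theta_{ABC}^{\dagger}$ the factors $\rho_{BC}^{\frac{1}{2}}$ and $\rho_B^{-\frac{1}{2}}$ pull out of $\tr_A$, and $\tr_A(\rho_{AB})=\rho_B$ cancels the pseudoinverses to leave $\rho_{BC}$; from $\Theta_{ABC}^{\dagger}\Theta_{ABC}$ the analogous manipulation gives $\tr_C(\rho_{ABC})=\rho_{AB}$, this being the step where normality is essential. Trace one follows, and since by construction $\rho_{ABC}=(\mathcal{I}_A\otimes\mathcal{P}_{B\to BC})(\rho_{AB})$ with $\mathcal{P}_{B\to BC}$ the Petz CPTP map built from the true marginal $\rho_{BC}=\tr_A(\rho_{ABC})$, Definition~\ref{def:qmc} identifies $\rho_{ABC}$ as a QMC in $A$-$B$-$C$ compatible with both given marginals.

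The main subtlety I anticipate is a possibly non-invertible $\rho_B$: the inverse powers must be understood as Moore--Penrose pseudoinverses on $\mathrm{supp}(\rho_B)$, and one has to verify that the supports of $\rho_{AB}$ and $\rho_{BC}$ are contained in $\mathrm{supp}(\rho_B)\otimes\mathrm{id}$ (standard from the consistency on $\rho_B$), so that the cancellation $\rho_B^{-\frac{1}{2}}\rho_B\rho_B^{-\frac{1}{2}}$ effectively acts as the identity on all vectors reached by $\rho_{AB}$ and $\rho_{BC}$ and the partial-trace computations above remain valid.
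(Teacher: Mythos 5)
Your proof is correct. The sufficiency direction coincides with the paper's: both define the candidate state as $\Theta_{ABC}\Theta_{ABC}^{\dagger}$, read off $\tr_A$ from that form, use normality together with $\tr_C(\rho_{BC})=\rho_B$ to read off $\tr_C$ from $\Theta_{ABC}^{\dagger}\Theta_{ABC}$, and invoke Definition~\ref{def:qmc} with the Petz CPTP map applied to $\rho_{AB}$. The necessity direction is where you genuinely diverge. The paper invokes the structure theorem of Hayden--Jozsa--Petz--Winter (Theorem~\ref{th:3}): it decomposes $\mathcal{H}_B=\bigoplus_j\mathcal{H}_{B_j^L}\otimes\mathcal{H}_{B_j^R}$, writes $\rho_{AB}$, $\rho_{BC}$ and $\rho_B$ in block form, and computes $\Theta_{ABC}$ explicitly as a direct sum of positive operators, concluding that $\Theta_{ABC}$ is in fact \emph{self-adjoint} --- a strictly stronger conclusion than normality. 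You instead exploit the manifest $A\leftrightarrow C$ symmetry of $I_{\rho}(A:C|B)$ to obtain two Petz reconstructions of the same QMC, namely $\Theta_{ABC}\Theta_{ABC}^{\dagger}$ and $\Theta_{ABC}^{\dagger}\Theta_{ABC}$, and equate them. This is more economical: it needs only Theorem~\ref{th:1}, its accompanying remark (which your symmetry argument in fact justifies more cleanly than the paper's brief appeal to concavity and uniqueness), and Proposition~\ref{prop:2}, and it bypasses the structure theorem entirely; the price is that it establishes only normality, whereas the paper's route additionally exhibits the explicit block structure and self-adjointness of $\Theta_{ABC}$. Your closing remark on supports and pseudoinverses addresses a point the paper leaves implicit, since it tacitly works with invertible states (cf.\ the hypothesis of Lemma~\ref{lemma:1}).
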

Before giving the proof of Theorem~\ref{th:2}, we will need one further result.
\begin{theorem}
\label{th:3}
\cite{hay:joz:pet:win:04}
A tripartite density operator $\rho_{ABC}\in\mathcal{L}(\mathcal{H}_{ABC})$ satisfies the SSA with equality, i.e., $I_{\rho}(A:C|B)=0$, if and only if there exits a decomposition of the Hilbert space $\mathcal{H}_{B}$ of the form
\begin{align}
\mathcal{H}_{B}=\bigoplus_{j}\mathcal{H}_{B_{j}^{L}}\otimes\mathcal{H}_{B_j^{R}}, \textnormal{ such that } \rho_{ABC}=\bigoplus_{j}p_j\, \rho_{AB_j^{L}}\otimes\rho_{B_j^R C},
\end{align}
with $p_{j}\geq 0$, for all $j$, $\sum_j p_j=1$ and the states $\rho_{AB_{j}^{L}}\in \mathcal{L}\left(\mathcal{H}_{A}\otimes\mathcal{H}_{B_j^{L}}\right)$ and $\rho_{B_{j}^{R}C}\in \mathcal{L}\left(\mathcal{H}_{B_j^R}\otimes\mathcal{H}_{C}\right)$.                 
\end{theorem}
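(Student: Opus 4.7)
The plan is to prove the two implications separately, handling the ``if'' direction by a direct entropy computation and the ``only if'' direction by combining Lemma~\ref{lemma:1} with the Wedderburn structure theorem for finite-dimensional $*$-algebras.

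\emph{If direction.} Assume the claimed decomposition. Taking partial traces gives $\rho_{AB}=\bigoplus_j p_j\,\rho_{AB_j^L}\otimes\rho_{B_j^R}$, $\rho_{BC}=\bigoplus_j p_j\,\rho_{B_j^L}\otimes\rho_{B_j^R C}$ and $\rho_B=\bigoplus_j p_j\,\rho_{B_j^L}\otimes\rho_{B_j^R}$, with $\rho_{B_j^L}:=\tr_A\rho_{AB_j^L}$ and $\rho_{B_j^R}:=\tr_C\rho_{B_j^R C}$. The identity $S\big(\bigoplus_j p_j\,\sigma_j\big)=H(p)+\sum_j p_j\,S(\sigma_j)$ for states with orthogonal supports, combined with additivity of $S$ on tensor products, lets me expand each of the four entropies; the Shannon contribution $H(p)$ and each tensor-factor term cancel pairwise, yielding $I_\rho(A:C|B)=0$.

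\emph{Only if direction.} First I would reduce to invertible $\rho_{ABC}$ by restricting to its support (the blocks $B_j^L\otimes B_j^R$ will live inside the support of $\rho_B$). Lemma~\ref{lemma:1} then supplies the operator identity $\log\rho_{ABC}=\log\rho_{AB}+\log\rho_{BC}-\log\rho_B$, with all logarithms extended by the identity on the missing factors. I introduce the Hermitian operators $H_{AB}:=\log\rho_{AB}-\mbox{id}_A\otimes\log\rho_B$ on $\mathcal{H}_{AB}$ and $H_{BC}:=\log\rho_{BC}-\log\rho_B\otimes\mbox{id}_C$ on $\mathcal{H}_{BC}$, so that $\log\rho_{ABC}=H_{AB}+H_{BC}+\log\rho_B$, a Hermitian sum whose three pieces are supported on $AB$, $BC$ and $B$ respectively.

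The core algebraic step is to analyse how $H_{AB}$ and $H_{BC}$ interact on the common factor $B$. Consider the finite-dimensional $*$-subalgebra $\mathcal{N}\subseteq\mathcal{L}(\mathcal{H}_B)$ generated by partially tracing over $A$ all polynomial functions of $H_{AB}$ (equivalently, the smallest $*$-algebra on $\mathcal{H}_B$ capturing how $\rho_{AB}$ and $\rho_B$ differ). By the Wedderburn decomposition of finite-dimensional $*$-subalgebras of $\mathcal{L}(\mathcal{H}_B)$, there exists a block-tensor structure $\mathcal{H}_B=\bigoplus_j\mathcal{H}_{B_j^L}\otimes\mathcal{H}_{B_j^R}$ with $\mathcal{N}=\bigoplus_j\mathcal{L}(\mathcal{H}_{B_j^L})\otimes\mbox{id}_{B_j^R}$ and commutant $\mathcal{N}'=\bigoplus_j\mbox{id}_{B_j^L}\otimes\mathcal{L}(\mathcal{H}_{B_j^R})$. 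I would then show, using the operator identity and the fact that $H_{AB}$ and $H_{BC}$ commute with operators supported purely on $C$ and $A$ respectively, that $H_{AB}\in\mathcal{L}(\mathcal{H}_A)\otimes\mathcal{N}$ while $H_{BC}\in\mathcal{N}'\otimes\mathcal{L}(\mathcal{H}_C)$; this forces $\rho_{AB}$ and $\rho_{BC}$ to respect the blocks, and within each block the $A$-part and $C$-part commute by construction. Exponentiating the identity restricted to a block then gives $\rho_{ABC}=\bigoplus_j p_j\,\rho_{AB_j^L}\otimes\rho_{B_j^R C}$, with the weights $p_j$ fixed by normalisation.

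\emph{Main obstacle.} The Wedderburn theorem produces the tensor factorisation once one has the correct pair of commuting $*$-subalgebras of $\mathcal{L}(\mathcal{H}_B)$; the delicate part is proving that the algebra arising from $\rho_{AB}$ and the one arising from $\rho_{BC}$ genuinely commute on $B$ and together generate $\mathcal{L}$ of each block. I expect to handle this either by a careful functional-calculus argument using Lemma~\ref{lemma:1} and cyclicity of the trace, or by appealing to the Koashi--Imoto theorem on information-preserving subsystems, which outputs exactly this factorisation in canonical form.
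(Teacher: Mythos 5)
First, a point of order: the paper does not actually prove Theorem~\ref{th:3}. It is imported verbatim from Ref.~\cite{hay:joz:pet:win:04}, and the proof environment that immediately follows its statement is the proof of Theorem~\ref{th:2}, which merely \emph{uses} the decomposition. So there is no in-paper argument to compare yours against; what can be judged is whether your attempt would stand as a proof of the cited result.

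Your ``if'' direction is complete and correct: the four entropies expand via $S\big(\bigoplus_j p_j\sigma_j\big)=H(p)+\sum_j p_j S(\sigma_j)$ together with additivity on tensor factors, and all terms cancel. The ``only if'' direction, however, has a genuine gap at exactly the point where the Hayden--Jozsa--Petz--Winter argument does its real work. From Lemma~\ref{lemma:1} you obtain the single operator identity $\log\rho_{ABC}=H_{AB}+H_{BC}+\log\rho_B$ with the three terms supported on $AB$, $BC$ and $B$; but this identity alone does not force $H_{AB}$ and $H_{BC}$ to commute, nor does it place them in $\mathcal{L}(\mathcal{H}_A)\otimes\mathcal{N}$ and $\mathcal{N}'\otimes\mathcal{L}(\mathcal{H}_C)$ for some subalgebra $\mathcal{N}\subseteq\mathcal{L}(\mathcal{H}_B)$ --- that localization is essentially the conclusion of the theorem, not an input, and the step you defer (``I would then show\dots'') is the entire content of the hard direction. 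The known way to close this is to upgrade Lemma~\ref{lemma:1} to Petz's equality condition for monotonicity of relative entropy in its one-parameter form, $\rho_{ABC}^{it}\,(\mbox{id}_A\otimes\rho_{BC})^{-it}=\big(\rho_{AB}^{it}\,(\mbox{id}_A\otimes\rho_B)^{-it}\big)\otimes\mbox{id}_C$ for all $t\in\mathbb{R}$: the left-hand side commutes with everything supported on $C$ while the right-hand side is supported on $AB$, and the group generated by these unitaries is what yields, via the Wedderburn/Koashi--Imoto decomposition, a genuine $*$-subalgebra $\mathcal{N}$ with the required commutation against the $BC$ data. By contrast, your $\mathcal{N}$, defined by partially tracing polynomials of $H_{AB}$ over $A$, is not shown to be a $*$-algebra with the stated commutant property, and the final exponentiation of the block-restricted identity into a tensor product is unjustified without the commutativity in hand. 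A smaller but real issue: Lemma~\ref{lemma:1} is stated only for invertible $\rho_{ABC}$, so your reduction to the support needs an explicit argument that the support of $\rho_{ABC}$ is compatible with the supports of $\rho_{AB}$, $\rho_{BC}$ and $\rho_B$ appearing in the identity.
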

\begin{proof}
$(\Rightarrow)$ Using compatibility, the following decompositions are a direct consequence of Theorem~\ref{th:3} 
\begin{align}
\rho_{AB}&=\bigoplus_{j}p_j\, \rho_{AB_j^{L}}\otimes\rho_{B_j^R}, \textnormal{ with } \rho_{B_j^R}=\tr_{C}(\rho_{B_j^RC}),\\
\rho_{BC}&=\bigoplus_{j}p_j\, \rho_{B_j^{L}}\otimes\rho_{B_j^RC}, \textnormal{ with } \rho_{B_j^{L}}=\tr_{A}(\rho_{AB_j^R}),\\
\rho_{B} &=\bigoplus_{j}p_j\, \rho_{B_j^{L}}\otimes\rho_{B_j^R}.
\end{align}
It is now trivial to see that $\tr_{A}(\rho_{AB})=\tr_{C}(\rho_{BC})$, as required. It remains to check that $\Theta_{ABC}$ is normal. Notice that, from the above decompositions, we have
\begin{align}
\Theta_{ABC} &=\rho_{BC}^{\frac{1}{2}} \rho_{B}^{-\frac{1}{2}}\rho_{AB}^{\frac{1}{2}}\nonumber\\
&=\left(\bigoplus_{j}p_j\, \rho_{B_j^{L}}\otimes\rho_{B_j^RC}\right)^{\!\! \frac{1}{2}}\! \! \! \left(\bigoplus_{k}p_k\, \rho_{B_k^{L}}\otimes\rho_{B_k^R}\right)^{\! \! -\frac{1}{2}}\! \! \! \left(\bigoplus_{l}p_l\, \rho_{AB_l^{L}}\otimes\rho_{B_l^R}\right)^{\!\! \frac{1}{2}}\nonumber\\
&=\bigoplus_{j}p_j\, \rho_{AB_j^{L}}^{\frac{1}{2}}\otimes\rho_{B_j^RC}^{\frac{1}{2}}.
\end{align}
and also
\begin{align}
\Theta_{ABC}^{\dagger}=\left(\rho_{BC}^{\frac{1}{2}}\rho_{B}^{-\frac{1}{2}}\rho_{AB}^{\frac{1}{2}}\right)^{\dagger}=\rho_{AB}^{\frac{1}{2}}\rho_{B}^{-\frac{1}{2}}\rho_{BC}^{\frac{1}{2}}=\bigoplus_{j}p_j\,\rho_{AB_j^L}^{\frac{1}{2}}\otimes\rho_{B_j^RC}^{\frac{1}{2}}=\Theta_{ABC},
\end{align}
hence we conclude that $\Theta_{ABC}$ is self-adjoint, therefore normal.

$(\Leftarrow)$ Consider the operator 
\begin{align}
\varrho_{ABC}=\Theta_{ABC}\, \Theta_{ABC}^{\dagger}=\rho_{BC}^{\frac{1}{2}}\rho_{B}^{-\frac{1}{2}}\rho_{AB}\rho_{B}^{-\frac{1}{2}}\rho_{BC}^{\frac{1}{2}}.	
\end{align}
We have:
\begin{itemize}
	\item[(i)] $\varrho_{ABC}$ is a density operator over $\mathcal{H}_{ABC}$. This can be seen from the fact that the map
	\begin{align}
	\mathcal{P}_{B\to BC}: &  \, \mathcal{L}(\mathcal{H}_{B})\longrightarrow \mathcal{L}(\mathcal{H}_{BC}) \nonumber \\
	&\sigma \mapsto \rho_{BC}^{\frac{1}{2}}\rho_{B}^{-\frac{1}{2}}\sigma \rho_{B}^{-\frac{1}{2}}\rho_{BC}^{\frac{1}{2}}	
	\end{align} 
	is a CPTP map, as mentioned above, and it extends to a CPTP map $\mathcal{I}_{A}\otimes \mathcal{P}_{B\to BC}$ which yields, when applied to $\rho_{AB}$, $\varrho_{ABC}$.
	\item[(ii)] $\varrho_{ABC}$ is compatible with $\{\rho_{AB},\rho_{BC}\}$. Indeed, we have,
	\begin{align}
	\tr_{A}(\varrho_{ABC})=\rho_{BC}^{\frac{1}{2}}\rho_{B}^{-\frac{1}{2}}\tr_{A}\left(\rho_{AB}\right)\rho_{B}^{-\frac{1}{2}}\rho_{BC}^{\frac{1}{2}}=\rho_{BC}^{\frac{1}{2}}\rho_{B}^{-\frac{1}{2}}\rho_{B}\rho_{B}^{-\frac{1}{2}}\rho_{BC}^{\frac{1}{2}}=\rho_{BC},
	\end{align}
	where we have used $\tr_{A}(\rho_{AB})=\rho_B$. Moreover, since $\Theta_{ABC}$ is normal and $\tr_{C}(\rho_{BC})=\rho_B$, we have,
	\begin{align}
	\tr_{C}(\varrho_{ABC})&=\tr_{C}(\Theta_{ABC}\, \Theta_{ABC}^{\dagger})=\tr_{C}(\Theta_{ABC}^{\dagger}\, \Theta_{ABC})=\tr_{C}(\rho_{AB}^{\frac{1}{2}}\rho_{B}^{-\frac{1}{2}}\rho_{BC}\rho_{B}^{-\frac{1}{2}}\rho_{AB}^{\frac{1}{2}})=\rho_{AB}.
	\end{align}
	\item[(iii)] By $(i)$ and $(ii)$ and Definition~\ref{def:qmc}, $\varrho_{ABC}$ is a QMC.
\end{itemize}
\end{proof}
From Theorem~\ref{th:2}, we have the following corollary:
\begin{corollary}
	\label{cor:1}
The space of pairs of bipartite marginals $\{\rho_{AB},\rho_{BC}\}$ which is compatible with a QMC is strictly included in the space of pairs of bipartite marginals $\{\rho_{AB},\rho_{BC}\}$ which are compatible.
\end{corollary}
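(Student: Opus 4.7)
\emph{Plan.} The proof splits into the trivial inclusion and the substantive strictness claim. For the inclusion, I would observe that if $\sigma_{ABC}$ is a QMC whose marginals are $\rho_{AB}$ and $\rho_{BC}$, then $\sigma_{ABC}$ is itself a tripartite density operator realising these two marginals, so the pair lies in $\mathrm{Comp}(\{\rho_{AB},\rho_{BC}\})$; this is immediate from Definition~\ref{def:comp}.

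For strictness I would lean on Theorem~\ref{th:2}. Any compatible pair automatically satisfies the overlap condition $\tr_A(\rho_{AB})=\tr_C(\rho_{BC})$ (as recalled just after Definition~\ref{def:comp}), so among compatible pairs the QMC-compatibility reduces to the single algebraic constraint that $\Theta_{ABC}=\rho_{BC}^{1/2}\rho_B^{-1/2}\rho_{AB}^{1/2}$ be normal. It is therefore enough to produce one compatible pair for which $\Theta_{ABC}$ is non-normal. My intended witness is the pair of bipartite marginals of a three-qubit pure state with strictly positive quantum conditional mutual information, most naturally the W state $|W\rangle=\frac{1}{\sqrt{3}}(|001\rangle+|010\rangle+|100\rangle)$: compatibility is witnessed by $|W\rangle\langle W|$ itself, and a direct spectral computation should show that the non-commuting supports of $\rho_{AB}^{1/2}$, $\rho_B^{-1/2}$ and $\rho_{BC}^{1/2}$ force $\Theta_{ABC}\neq\Theta_{ABC}^\dagger$ and $\Theta_{ABC}\Theta_{ABC}^\dagger\neq\Theta_{ABC}^\dagger\Theta_{ABC}$.

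The main obstacle is this concrete non-normality check, which is elementary but mildly tedious. A conceptually cleaner fallback, which I would use if the matrix calculation proves cumbersome, is to invoke Theorem~\ref{th:3}: any QMC sharing the W-state marginals would force a direct-sum decomposition $\mathcal{H}_B=\bigoplus_j\mathcal{H}_{B_j^L}\otimes\mathcal{H}_{B_j^R}$ in which both $\rho_{AB}=\frac{1}{3}|00\rangle\langle 00|+\frac{2}{3}|\psi^+\rangle\langle\psi^+|$ (with $|\psi^+\rangle=\frac{1}{\sqrt{2}}(|01\rangle+|10\rangle)$) and $\rho_{BC}$ are block diagonal with factorising blocks; ruling out such a decomposition on the two-dimensional $\mathcal{H}_B$ (where the only non-trivial option is a refinement into one-dimensional summands, incompatible with the entangled block $|\psi^+\rangle\langle\psi^+|$ shared across $A$ and $B$) delivers the strict inclusion without ever computing $\Theta_{ABC}$ explicitly.
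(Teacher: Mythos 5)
Your proposal is correct, but it takes a genuinely different (and arguably stronger) route than the paper. The paper treats the inclusion as immediate and then justifies strictness only \emph{numerically}: the authors generate random $3$-qubit density matrices, trace out to get compatible pairs, and test the condition of Theorem~\ref{th:2} until normality of $\Theta_{ABC}$ fails; no explicit analytic witness is exhibited. You instead propose a concrete counterexample, the $W$ state, and two ways to certify it. Your fallback via Theorem~\ref{th:3} is the cleaner and fully rigorous one: since $\dim\mathcal{H}_B=2$, the only decompositions $\mathcal{H}_B=\bigoplus_j\mathcal{H}_{B_j^L}\otimes\mathcal{H}_{B_j^R}$ are a single block (forcing $\rho_{ABC}=\rho_{AB}\otimes\rho_C$ or $\rho_A\otimes\rho_{BC}$, ruled out because both $W$-marginals are correlated) or two one-dimensional blocks (forcing $\rho_{AB}=\sum_j p_j\,\rho_A^{(j)}\otimes|e_j\rangle\langle e_j|$, a manifestly separable state). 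One caveat: to kill the second case you cannot argue merely that $\rho_{AB}$ ``contains the entangled block $|\psi^+\rangle\langle\psi^+|$'' --- a mixture with an entangled component can still be separable --- you must check that $\rho_{AB}=\tfrac13|00\rangle\langle00|+\tfrac23|\psi^+\rangle\langle\psi^+|$ is itself entangled, which follows from the PPT criterion (the partial transpose has a $2\times2$ principal block of negative determinant). With that one-line computation supplied, your argument is a complete analytic proof, whereas the paper's version is an empirical demonstration; what the paper's approach buys is brevity and a direct illustration of Theorem~\ref{th:2} in action, while yours buys an actual proof and a reusable explicit example of compatible marginals admitting no QMC extension. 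Your first route (showing $\Theta_{ABC}$ non-normal directly) would also work but is left as an unexecuted matrix computation, so as written it is the weaker of your two options.
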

The last statement seems rather intuitive, but cannot be seen immediately due to the fact that the quantum marginal problem, also for two overlapping marginals, is open. To provide a counter-example, we numerically generated a random $3$-qubit density matrix, from which we obtained two compatible marginals via partial trace. Then, we check the compatibility condition with a quantum Markov chain in Theorem~\ref{th:2}, given the two marginals. We repeat the process until the latter conditions fails.
%%%%%%%%%%%%%%%%%%%%%%%%%%%%%%%%%%%%%%%%%%%%%%%%%%%%%%%%
\section{Quantum Bayesian updating}
\label{section: quantum bayesian updating}
The \emph{principle of minimum discrimination information}~\cite{kul:97,kul:66}, is intrinsically related to the maximum entropy principle and it is at the base of inferential updating of probability distributions. Given a \emph{prior} joint probability distribution $q_{{\bf X}}$ describing the random variables ${\bf X}:=\{X_1...X_n\}$ and an additional set of new information about the system, the most unbiased \emph{posterior} $p_{{\bf X}}$ corresponds to the one that minimizes the Kullback-Leibler divergence with the prior distribution under the constraints given by the additional information. If the prior is the most un-informative one, i.e. the  uniform distribution, then the problem is equivalent to finding the probability distribution that maximizes the Shannon entropy with the given constraints. In Ref.~\cite{van:17}, the choice of the Kullback-Leibler divergence as a functional for inferential updating is explained in light of the maximum entropy principle and some designed criteria.

The learning problem for multipartite quantum states we are proposing here can be stated a multiple-step inferential updating procedure, where starting from the uniform distribution, i.e. the maximally mixed state, the marginals are the additional information set available at each step.

In Ref.~\cite{van:18}, a generalization to the quantum realm of the Bayesian updating procedure is proposed. The Kullback-Leibler divergence is generalized by the von Neumann relative entropy, not by mere replacement, but deriving it from the same designed criteria and the maximum (von Neumann) entropy principle.
\begin{definition}
\label{def:qbu}
(Quantum Bayesian updating) Given a quantum system $\mathcal{X}=\{X_1,...,X_n\}$ described by the joint Hilbert space $\mathcal{H}_{\mathcal{X}}=\bigotimes_{i=1}^{n}\mathcal{H}_{X_i}$, let $\varrho_\mathcal{X}\in\mathcal{L}(\mathcal{H}_{\mathcal{X}})$ be our a prior knowledge about it, and $\{\langle \Theta_j \rangle\}_{j\in J}$ a set of expectation values corresponding to a set of observables $\{\Theta_j\}_{j\in J}$ be the additional set of information. Then, via the principle of minimum updating, the posterior density operator assumes the form
\begin{align}
\label{eq: qbu}
\rho_\mathcal{X}=\exp\left(\lambda \, \mbox{\textnormal{id}}_{\mathcal{X}} + \sum_{j\in J}\alpha_j  \Theta_j +\log \varrho_\mathcal{X}\right),
\end{align} 
where the Lagrange multipliers $\lambda$ and $\{\alpha_j\}_{j\in J}$ are determined by the constraints $\tr(\rho)=1$ and $\tr(\rho\, \Theta_j)=\langle \Theta_j\rangle$, $j\in J$. We indicate the updating process of Eq.~\eqref{eq: qbu} by the diagram
\begin{align}
\varrho_\mathcal{X} \xrightarrow{\, \{\Theta_j,\langle \Theta_j\rangle\}_{j\in J} \,} \rho_\mathcal{X}.
\end{align}
\end{definition}
Observing that $S(\rho_\mathcal{X})=-S(\rho_\mathcal{X}||\unif{\mathcal{X}})$, it follows that the maximum entropy density operator of Eq.~\ref{eq:4} is equivalent to the output of the process:
\begin{align}
\label{eq:27}
\unif{ABC}\xrightarrow{\{\rho_{AB},\rho_{BC}\}}\widetilde{\rho}_{ABC}.
\end{align}
We could split the updating process into two processes, i.e:
\begin{align}
\label{eq:28}
\unif{ABC}\xrightarrow{\rho_{AB}}\widetilde{\sigma}'_{ABC}\xrightarrow{\rho_{BC}}\widetilde{\sigma}_{ABC},
\end{align}
or
\begin{align}
\label{eq:29}
\unif{ABC}\xrightarrow{\rho_{BC}}\widetilde{\varrho}'_{ABC}\xrightarrow{\rho_{AB}}\widetilde{\varrho}_{ABC},
\end{align}
and, classically, one can show, all these updating processes, Eq.~\eqref{eq:27}, Eq.~\eqref{eq:28} and Eq.~\eqref{eq:29}, are equivalent. However, in the quantum scenario, this is not the case due to the non-commutative nature of the observables involved. In Ref.~\cite{sut:faw:ren:16}, it is shown that, in contrast to the classical case, where the conditional mutual information is always a measure of the distance of a tripartite quantum state to a general Markov chain, in the quantum case it is not. In particular,
\begin{align}
I_{\rho}(A:C|B)\geq S(\rho_{ABC}||\mathcal{R}_{B\to BC}(\rho_{AB})),
\end{align}
where $\mathcal{R}_{B\to BC}(\cdot)$ is an arbitrary recovery map. Recently, tighter versions of the above inequality have been the object of study by several authors~\cite{bra:ara:opp:str:15}. The equality holds when $I_{\rho}(A:C|B)= 0$, i.e., for a QMC. 
The fact that the three above processes, Eq.~\eqref{eq:27}, Eq.~\eqref{eq:28} and Eq.~\eqref{eq:29}, lead to different density operators corresponds to the non-commutativity of a diagram. Moreover, in the next theorem we provide necessary and sufficient conditions for the diagram to commute, which provides a characterization of QMCs in terms of QBU.
\begin{theorem}
	\label{th:4}
Given a set of marginals $\{\rho_{AB},\rho_{BC}\}$, we then have a commutative diagram of quantum Bayesian updating processes:
\begin{align*}
\xymatrix@R+1pc@C+2pc{ \unif{ABC} \ar[r]^{\rho_{AB}}\ar[d]_{\rho_{BC}} & \widetilde{\sigma}'_{ABC}\ar[d]^{\rho_{BC}}\\
               \widetilde{\varrho}'_{ABC}\ar[r]_{\rho_{AB}} & \widetilde{\rho}_{ABC}\\
                 }
\end{align*}
if and only if they are compatible with a QMC in the order $A-B-C$.        
\end{theorem}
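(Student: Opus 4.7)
The plan is to leverage the uniqueness of the Bayesian update output, a consequence of strict concavity of the von Neumann relative entropy mentioned after Definition~\ref{def:qbu}, together with Theorem~\ref{th:1} and Lemma~\ref{lemma:1}. Updating $\unif{ABC}$ with the single constraint $\rho_{AB}$ yields the maximum-entropy state compatible with that one marginal, namely $\widetilde{\sigma}'_{ABC}=\rho_{AB}\otimes\unif{C}$. Definition~\ref{def:qbu} then presents the second update as
\begin{align*}
\log\widetilde{\sigma}_{ABC}=\log\rho_{AB}-\log d_{C}\,\mbox{id}+\Phi_{BC}+\lambda\,\mbox{id},
\end{align*}
with $\Phi_{BC}$ a self-adjoint operator on $\mathcal{H}_{BC}$ uniquely determined by $\tr_{A}\widetilde{\sigma}_{ABC}=\rho_{BC}$; symmetrically $\log\widetilde{\varrho}_{ABC}=\log\rho_{BC}-\log d_{A}\,\mbox{id}+\Psi_{AB}+\mu\,\mbox{id}$, with $\Psi_{AB}$ fixed by $\tr_{C}\widetilde{\varrho}_{ABC}=\rho_{AB}$.

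For the $(\Leftarrow)$ direction, compatibility with a QMC together with Theorem~\ref{th:1} identifies $\widetilde{\rho}_{ABC}$ with that QMC, and Lemma~\ref{lemma:1} gives $\log\widetilde{\rho}_{ABC}=\log\rho_{AB}+\log\rho_{BC}-\log\rho_{B}$. This exhibits $\widetilde{\rho}_{ABC}$ in the first QBU shape with $\Phi_{BC}=\log\rho_{BC}-\log\rho_{B}$ (after absorbing scalar terms into $\lambda$); since the QMC manifestly satisfies $\tr_{A}\widetilde{\rho}_{ABC}=\rho_{BC}$, uniqueness of the QBU output forces $\widetilde{\sigma}_{ABC}=\widetilde{\rho}_{ABC}$. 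The symmetric argument applied to the lower path yields $\widetilde{\varrho}_{ABC}=\widetilde{\rho}_{ABC}$, so the diagram commutes.

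For the $(\Rightarrow)$ direction, assume the diagram commutes, so $\widetilde{\sigma}_{ABC}=\widetilde{\varrho}_{ABC}=\widetilde{\rho}_{ABC}$ and this state inherits all three marginals $\rho_{AB},\rho_{BC},\rho_{B}$. Equating the two logarithmic expressions and rearranging,
\begin{align*}
\log\rho_{AB}-\Psi_{AB}=\log\rho_{BC}-\Phi_{BC}+\textnormal{const}\cdot\mbox{id}.
\end{align*}
The left-hand side is supported on $AB$ (identity on $C$) while the right-hand side is supported on $BC$ (identity on $A$); an operator admitting both shapes must be supported on $B$ alone, up to the global identity. Hence there is a self-adjoint $K_{B}$ on $\mathcal{H}_{B}$ with $\Psi_{AB}=\log\rho_{AB}-K_{B}$ and $\Phi_{BC}=\log\rho_{BC}-K_{B}$, modulo scalars, so that
\begin{align*}
\log\widetilde{\rho}_{ABC}=\log\rho_{AB}+\log\rho_{BC}-K_{B}+c\,\mbox{id}.
\end{align*}
This is exactly the form required by Lemma~\ref{lemma:1} to characterise SSA saturation, provided $K_{B}=\log\rho_{B}$ (after absorbing $c$), in which case $\widetilde{\rho}_{ABC}$ is a QMC in order $A$-$B$-$C$.

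The main obstacle is the final identification $K_{B}=\log\rho_{B}$: because $\log\rho_{AB}$, $\log\rho_{BC}$ and $\log\rho_{B}$ fail to commute in general, an elementary partial-trace computation on the matrix exponential does not reveal $K_{B}$. My plan is to compute $S(\widetilde{\rho}_{ABC})=-\tr(\widetilde{\rho}_{ABC}\log\widetilde{\rho}_{ABC})$ directly from the logarithmic form, which yields $S(\widetilde{\rho}_{ABC})=S(\rho_{AB})+S(\rho_{BC})+\tr(\rho_{B}K_{B})-c$ using the marginals of $\widetilde{\rho}_{ABC}$, and then apply Klein's inequality to the pair $\rho_{B}$ and $e^{K_{B}}/\tr e^{K_{B}}$ jointly with strong subadditivity to force the Klein equality case $K_{B}=\log\rho_{B}+\textnormal{const}\cdot\mbox{id}$; an alternative route is to use uniqueness of the QBU output a second time, comparing $\widetilde{\rho}_{ABC}$ with the candidate $\exp(\log\rho_{AB}+\log\rho_{BC}-\log\rho_{B})/Z$ and invoking Theorem~\ref{th:3} to match the block structure on $\mathcal{H}_{B}$ implicit in the marginals.
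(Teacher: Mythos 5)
Your $(\Leftarrow)$ direction is correct and takes a genuinely different route from the paper: you argue that the QMC, which by Lemma~\ref{lemma:1} has exactly the Lagrangian exponential shape required by Definition~\ref{def:qbu} and is feasible for the second update, must coincide with the unique minimiser (sufficiency of the first-order conditions for a convex problem). The paper instead chains the data-processing inequality for the relative entropy against the saturation of SSA to squeeze $\widetilde{\sigma}_{ABC}$ and $\widetilde{\varrho}_{ABC}$ onto $\widetilde{\rho}_{ABC}$; both arguments work, and yours is arguably more economical. The first half of your $(\Rightarrow)$ direction also matches the paper: equating the two logarithmic forms and using the support argument to extract a Hermitian $K_B$ on $\mathcal{H}_B$ with $\log\widetilde{\rho}_{ABC}=\log\rho_{AB}+\log\rho_{BC}-K_B+c\,\mbox{id}$ is precisely the paper's Eq.~\eqref{eq:56} with $\log\theta_B=-K_B+c\,\mbox{id}$.

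The gap is the final step you yourself flag, and your proposed repair does not close it. Writing $S(\widetilde{\rho}_{ABC})=S(\rho_{AB})+S(\rho_{BC})+\tr[\rho_B K_B]-c$, SSA yields $\tr[\rho_B(K_B-\log\rho_B)]\le c$, while Klein's inequality applied to $\rho_B$ and $e^{K_B}/\tr\, e^{K_B}$ yields $\tr[\rho_B(K_B-\log\rho_B)]\le\log\tr\, e^{K_B}$: these are two upper bounds of the same sign, and neither supplies the matching lower bound needed to force equality in SSA or the Klein equality case. Your alternative route also fails: the candidate $\exp(\log\rho_{AB}+\log\rho_{BC}-\log\rho_B)/Z$ is not in the feasible set (its partial traces are not $\rho_{AB}$ and $\rho_{BC}$ unless the state is already a QMC), so uniqueness of the QBU output cannot be invoked to compare with it. The missing ingredient is Lemma~\ref{lemma:2}, whose two operator inequalities acquire \emph{opposite} signs after the substitution: item (i), traced over $C$, gives $\tr[\rho_B(K_B-\log\rho_B)]\le c$, while item (ii) gives $\tr[\rho_B(K_B-\log\rho_B)]\ge c$, hence equality. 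Note also that you are aiming at the operator identity $K_B=\log\rho_B$, which is stronger than necessary and probably not derivable by these means; the scalar identity $\tr[\rho_B(K_B-\log\rho_B)]=c$ already rewrites, after adding and subtracting $\log\rho_{AB}$ and $\log\rho_{BC}$, as $\tr[\widetilde{\rho}_{ABC}(\log\widetilde{\rho}_{ABC}+\log\rho_B-\log\rho_{AB}-\log\rho_{BC})]=0$, i.e.\ $I(A:C|B)=0$, which is all the theorem requires.
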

To provide a proof of this theorem we will need the following lemma.
\begin{lemma}
\label{lemma:2}\cite{kim:12,rus:13}
Given a tripartite quantum state $\rho_{ABC}\in \mathcal{L}(\mathcal{H}_{ABC})$ compatible with $\{\rho_{AB},\rho_{BC}\}$, the following operator-inequalities hold:
\begin{itemize}
	\item [(i)] $\tr_{AB}[\rho_{ABC}(\log\rho_{ABC}-\log\rho_{AB}+\log\rho_B-\log\rho_{BC})]\succeq 0$,
	\item [(ii)] $\tr_{AB}[\rho_{AB}(\log\rho_{AB}-\log\rho_{ABC}-\log\rho_B+\log\rho_{BC})]\succeq 0$,
\end{itemize}
where the equality holds if and only if $\rho_{ABC}$ is a QMC in the order $A-B-C$. Above, $\succeq 0$ stands for positive 	semidefinite.
\end{lemma}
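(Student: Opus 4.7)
\medskip
\noindent\textbf{Proof plan for Lemma~\ref{lemma:2}.} Both inequalities, due to Kim~\cite{kim:12} and Ruskai~\cite{rus:13}, are operator-valued refinements of the strong subadditivity of the von Neumann entropy, lifting a scalar positivity to a positive-semidefiniteness statement on $\mathcal{L}(\mathcal{H}_C)$. First I would observe that taking a further trace over $C$ in (i) yields $-S(\rho_{ABC})+S(\rho_{AB})-S(\rho_{B})+S(\rho_{BC})=I_{\rho}(A:C|B)\ge 0$, reproducing SSA at the scalar level (Proposition~\ref{prop:2}); the point of the lemma is therefore to sharpen this to a statement about the operator structure on $C$, and similarly for (ii).

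My plan is to use the integral representation $\log X=\int_{0}^{\infty}\bigl[(1+t)^{-1}-(X+t)^{-1}\bigr]\,dt$ (valid for $X\succ 0$) to rewrite each of $\log\rho_{ABC},\log\rho_{AB},\log\rho_{BC},\log\rho_{B}$ as a resolvent integral. Substituting into (i) and exchanging $\tr_{AB}$ with the $t$-integration by Fubini reduces the claim to a pointwise operator-positivity: for every $t\ge 0$, $\tr_{AB}\bigl[\rho_{ABC}\bigl((\rho_{AB}+t)^{-1}-(\rho_{ABC}+t)^{-1}+(\rho_{BC}+t)^{-1}-(\rho_{B}+t)^{-1}\bigr)\bigr]\succeq 0$ on $\mathcal{H}_{C}$. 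This resolvent-level inequality is the technical heart of~\cite{kim:12}; it is proved by combining Lieb's concavity theorem with a Schwarz / Schur-complement estimate for positive block matrices exploiting the shared marginal $\rho_B$ of $\rho_{AB}$ and $\rho_{BC}$. Inequality (ii) follows by the analogous substitution with $\rho_{AB}$ replacing $\rho_{ABC}$ in the outer factor and the sign of the bracket flipped accordingly.

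For the equality conditions, Lemma~\ref{lemma:1} (Petz) is the pivot. If $\rho_{ABC}$ is a QMC in the order $A$-$B$-$C$, then $\log\rho_{ABC}-\log\rho_{AB}=\log\rho_{BC}-\log\rho_{B}$, so the Hermitian operator $F:=\log\rho_{ABC}-\log\rho_{AB}+\log\rho_{B}-\log\rho_{BC}$ sitting inside the partial trace in (i) and (minus) in (ii) vanishes identically, and both collapse to zero---equality. For the converse, equality in (i) or (ii) forces the nonnegative $t$-integrand to vanish for every $t\ge 0$; the injectivity of the log-representation together with the spectral calculus then extracts $F=0$ as an operator identity, and by Lemma~\ref{lemma:1} this is exactly the Petz characterization of $\rho_{ABC}$ as a QMC in order $A$-$B$-$C$.

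The main obstacle is the pointwise operator-positivity of the resolvent integrand: the rest---the integral representation of $\log$, Fubini, and extracting $F=0$ from pointwise vanishing---is routine bookkeeping, whereas the pointwise step requires the full operator-convex-function machinery (Lieb concavity, joint convexity of quantum relative entropy, and Petz's analysis of equality in operator Jensen-type inequalities) developed in~\cite{kim:12,rus:13}.
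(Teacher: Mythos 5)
You should first be aware that the paper does not prove Lemma~\ref{lemma:2} at all: it is imported verbatim from Refs.~\cite{kim:12,rus:13}, and the proof environment that follows it in the source is the proof of Theorem~\ref{th:4}. Your attempt therefore has to stand on its own, and as it stands it has a genuine gap at its centre. After applying the integral representation of the logarithm and Fubini, you reduce (i) to the claim that $\tr_{AB}\bigl[\rho_{ABC}\bigl((\rho_{AB}+t)^{-1}-(\rho_{ABC}+t)^{-1}+(\rho_{BC}+t)^{-1}-(\rho_{B}+t)^{-1}\bigr)\bigr]$ is positive semidefinite \emph{for every} $t\ge 0$. That is a strictly stronger statement than the integrated one, you give no argument for it, and it is not ``the technical heart of~\cite{kim:12}'': Kim's proof differentiates Lieb's concavity theorem (concavity of $p\mapsto\tr[K^{\dagger}A^{p}KB^{1-p}]$) at an endpoint, in the spirit of the original Lieb--Ruskai proof of SSA, and Ruskai's derivation of (ii) and of the equality conditions likewise does not pass through a pointwise-in-$t$ resolvent positivity. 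Note that even the \emph{scalar} SSA does not survive the replacement $\log x\mapsto -(x+t)^{-1}$: the classical proof uses $\log x-\log y=\log(x/y)$ plus Jensen, a multiplicative structure the resolvent lacks, so your pointwise claim would need a genuinely new argument; without it the plan collapses to ``cite \cite{kim:12,rus:13}''. A secondary point you skip is that the operators in (i)--(ii) must be shown to be Hermitian in the first place ($\tr_{AB}[\rho_{ABC}\log\rho_{AB}]$ is the partial trace of a non-Hermitian product), which is part of the content of the statement.

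The equality analysis is half right. The direction QMC $\Rightarrow$ equality is fine: by Lemma~\ref{lemma:1} the bracket $F=\log\rho_{ABC}-\log\rho_{AB}+\log\rho_{B}-\log\rho_{BC}$ vanishes identically, so both expressions are zero. The converse as you wrote it does not work: vanishing of the (allegedly) nonnegative integrands gives $\tr_{AB}[\rho_{ABC}G_{t}]=0$ for all $t$, and no ``injectivity of the log-representation'' lets you conclude $F=0$ from that, because $\tr_{AB}[\rho\,\cdot\,]$ is far from injective and $\tr[\rho F]=0$ with $F$ Hermitian does not force $F=0$. For (i) the repair is much simpler and bypasses the integral representation entirely: if the operator in (i) vanishes, take $\tr_{C}$ of it to obtain $S(\rho_{AB})+S(\rho_{BC})-S(\rho_{B})-S(\rho_{ABC})=I_{\rho}(A:C|B)=0$ and invoke Proposition~\ref{prop:2}. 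For (ii) the scalar shadow obtained by tracing over $C$ is \emph{not} $I_{\rho}(A:C|B)$ (it involves $\tr[(\rho_{AB}\otimes\mbox{id}_{C})\log\rho_{ABC}]$, which is not an entropy combination), so your ``similarly for (ii)'' also hides a real step; this is precisely the part of the equality analysis that requires Ruskai's argument in~\cite{rus:13}.
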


\begin{proof}
$(\Leftarrow)$ We will now show that if $\{\rho_{AB},\rho_{BC}\}$ is compatible with a QMC $\widetilde{\rho}_{ABC}$ in the order $A-B-C$, then $\widetilde{\sigma}_{ABC}=\widetilde{\varrho}_{ABC}=\widetilde{\rho}_{ABC}$.

Using the fact that $S(\rho)=-S(\rho||\unif{ABC})$, the first step of the updating process is obtained by maximizing the von Neumann entropy, yielding
\begin{align}
\widetilde{\sigma}'_{ABC}=\rho_{AB}\otimes\unif{C}	\quad \text{ and } \quad \widetilde{\varrho}'_{ABC}=\unif{A}\otimes \rho_{BC}.
\end{align}
Recalling that $\mbox{Comp}(\rho_{BC}):=\{\rho_{ABC}\in\mathcal{L}\left(\mathcal{H}_{ABC} \right) : \tr_{A}(\rho_{ABC})=\rho_{BC}\}$. Then 
\begin{align}
\widetilde{\sigma}_{ABC}=\underset{\rho_{ABC}\in \mbox{Comp}(\rho_{BC})}
{\mbox{argmin}}S(\rho_{ABC}||\widetilde{\sigma}'_{ABC}).
\end{align}
It then follows that, since $\widetilde{\rho}_{ABC}\in \mbox{Comp}(\rho_{BC})$, 
\begin{align}
\label{eq:33}
S\left(\widetilde{\rho}_{ABC}||\rho_{AB}\otimes \unif{C}\right)\geq S\left(\widetilde{\sigma}_{ABC}||\rho_{AB}\otimes \unif{C}\right).
\end{align}
Analogously, $\mbox{Comp}(\rho_{AB})=\{\rho_{ABC}\in\mathcal{L}\left(\mathcal{H}_{ABC} \right) : \tr_{C}(\rho_{ABC})=\rho_{AB}\}$. Then 
\begin{align}
\widetilde{\varrho}_{ABC}=\underset{\rho_{ABC}\in \mbox{Comp}(\rho_{AB})}
{\mbox{argmin}}S(\rho_{ABC}||\widetilde{\varrho}'_{ABC}).
\end{align}
It then follows that, since $\widetilde{\rho}_{ABC}\in \mbox{Comp}(\rho_{AB})$, 
\begin{align}
\label{eq:35}
S\left(\widetilde{\rho}_{ABC}||\unif{A}\otimes \rho_{BC}
\right)\geq S\left(\widetilde{\varrho}_{ABC}||\unif{A}\otimes \rho_{BC}\right).
\end{align}
The SSA inequality can be obtained as a special case as the contractibility property of the quantum relative entropy under CPTP maps~\cite{lin:73,les:rus:99}:
\begin{align}
S(\rho||\sigma)\geq S(\Phi(\rho)||\Phi(\sigma)), \text{ for all }\rho,\sigma \text{ and } \Phi \text{ a CPT map.}	
\end{align}
Indeed, set $\rho=\rho_{ABC}\in \mbox{Comp}(\rho_{AB})\cap \mbox{Comp}(\rho_{BC})$, then:
\begin{itemize}
	\item [(i)] For $\sigma=\rho_{AB}\otimes \unif{C}$ and $\Phi(\cdot)=\tr_{A}(\cdot)$, we get
	\begin{align}
	\label{eq:37}
		S\left(\rho_{ABC}||\rho_{AB}\otimes \unif{C}\right)\geq S\left(\rho_{BC}||\rho_{B}\otimes \unif{C}\right).
	\end{align}
	\item [(ii)] For $\sigma=\unif{A}\otimes\rho_{BC}$ and $\Phi(\cdot)=\tr_{C}(\cdot)$, we get
	\begin{align}
	\label{eq:38}
		S\left(\rho_{ABC}||\unif{A}\otimes\rho_{BC}\right)\geq S\left(\rho_{AB}||\unif{A}\otimes \rho_{B}\right).
	\end{align}
\end{itemize} 
In both Eq.\eqref{eq:37} and Eq.~\eqref{eq:38} the equality holds for $\rho_{ABC}=\widetilde{\rho}_{ABC}$. Using the inequalities~\eqref{eq:33} and \eqref{eq:37} with $\rho_{ABC}=\widetilde{\sigma}_{ABC}$ and the Eq.\eqref{eq:37} with $\rho_{ABC}=\widetilde{\rho}_{ABC}$,
\begin{align}
& S\left(\widetilde{\rho}_{ABC}||\rho_{AB}\otimes \unif{C}\right)\geq S\left(\widetilde{\sigma}_{ABC}||\rho_{AB}\otimes \unif{C}\right)\geq S\left(\rho_{BC}||\rho_{B}\otimes \unif{C}\right)=S\left(\widetilde{\rho}_{ABC}||\rho_{AB}\otimes \unif{C}\right).
\end{align}
It then follows that $\widetilde{\rho}_{ABC}=\widetilde{\sigma}_{ABC}$.

Analogously, using the inequalities~\eqref{eq:35} and \eqref{eq:38} with $\rho_{ABC}=\widetilde{\varrho}_{ABC}$ and the Eq.\eqref{eq:38} with $\rho_{ABC}=\widetilde{\rho}_{ABC}$, we get $\widetilde{\rho}_{ABC}=\widetilde{\varrho}_{ABC}$.

$(\Rightarrow)$ We will now show that if $\widetilde{\sigma}_{ABC}=\widetilde{\varrho}_{ABC}$ then $\{\rho_{AB},\rho_{BC}\}$ is compatible with a QMC $\widetilde{\rho}_{ABC}$ in the order $A-B-C$ and $\widetilde{\rho}_{ABC}=\widetilde{\sigma}_{ABC}=\widetilde{\varrho}_{ABC}$.

To provide explicit representations of $\widetilde{\sigma}_{ABC}$ and $\widetilde{\varrho}_{ABC}$ as exponentials of some operators we write, using invertibility,
\begin{align}
\rho_{AB} =\exp\left(\sum_{i=0}^{d_{A}^2-1}\sum_{j=0}^{d_{B}^2-1}A_{ij}\Lambda_{i}^{(A)}\otimes \Lambda_{j}^{(B)}\right),\quad \rho_{BC} =\exp\left(\sum_{i=0}^{d_{B}^2-1}\sum_{j=0}^{d_{C}^2-1}C_{ij}\Lambda_{i}^{(B)}\otimes \Lambda_{j}^{(C)}\right),
\end{align}
for $\{A_{ij}\}$ and $\{B_{ij}\}$ real coefficients. For convenience, we also define the partition functions
\begin{align}
Z_{AB}=\exp(-A_{00}) \text{ and } Z_{BC}=\exp(-C_{00}).
\end{align}
  We then have, using Definition~\ref{def:qbu},
\begin{align}
\widetilde{\sigma}_{ABC}=\exp\left(\sum_{i=0}^{d_B^2-1}\sum_{j=0}^{d_C^2-1}\eta_{ij}\Lambda_i^{(B)}\Lambda_j^{(C)}+\log \widetilde{\sigma}'_{ABC}\right).
\end{align}
Where $\{\eta_{ij}\}$ are the Lagrange multipliers in the definition. After some algebra, we can write
\begin{align}
\log\widetilde{\sigma}_{ABC}=&\left(\eta_{00}-\log Z_{BC}-\log d_C\right)\mbox{id}_{ABC}+\sum_{i=0}^{d^2_B-1}\sum_{j=1}^{d_C^2-1}\eta_{ij}\Lambda_i^{(B)}\Lambda_j^{(C)} \nonumber \\
&+\sum_{i=1}^{d_A^2-1}\sum_{j=0}^{d_B^2-1}A_{ij}\Lambda_i^{(A)}\Lambda_j^{(B)}+\sum_{i=1}^{d_B^2-1}(A_{0i}+\eta_{i0})\Lambda_{i}^{(B)}.
\end{align}
Similarly,
\begin{align}
\widetilde{\varrho}_{ABC}=\exp\left(\sum_{i=0}^{d_A^2-1}\sum_{j=0}^{d_B^2-1}\lambda_{ij}\Lambda_i^{(A)}\Lambda_j^{(B)}+\log \widetilde{\varrho}'_{ABC}\right).
\end{align}
Where $\{\lambda_{ij}\}$ are the Lagrange multipliers in the definition. After some algebra, we can write
\begin{align}
\log\widetilde{\varrho}_{ABC}=&\left(\lambda_{00}-\log Z_{AB}-\log d_A\right)\mbox{id}_{ABC}+\sum_{i=1}^{d^2_A-1}\sum_{j=0}^{d_B^2-1}\lambda_{ij}\Lambda_i^{(A)}\Lambda_j^{(B)} \nonumber \\
&+\sum_{i=0}^{d_B^2-1}\sum_{j=1}^{d_C^2-1}C_{ij}\Lambda_i^{(B)}\Lambda_j^{(C)}+\sum_{i=1}^{d_B^2-1}(C_{i0}+\lambda_{0i})\Lambda_{i}^{(B)}.
\end{align}
Since $\widetilde{\sigma}_{ABC}=\widetilde{\varrho}_{ABC}$, by hypothesis, it follows that $\log\widetilde{\sigma}_{ABC}=\log\widetilde{\varrho}_{ABC}$ and, hence,
\begin{align}
& \eta_{00}-\log Z_{BC}-\log d_C=\lambda_{00}-\log Z_{AB}-\log d_A,\\
\label{eq:49}
& A_{ij}=\lambda_{ij},\ \text{for } i=1,...,d_A^2-1 \text{ and } j=0,...,d_B^2-1,\\
\label{eq:50}
& C_{ij}=\eta_{ij},\ \text{for } i=0,...,d_B^2-1 \text{ and }   j=1,...,d_C^2-1,\\
& A_{0i}+\eta_{i0}=C_{i0}+\lambda_{0i},\ \text{for }  i=1,...,d_B^2-1.
\end{align}
Observing that,
\begin{align}
&\sum_{i=1}^{d^2_A-1}\sum_{j=0}^{d_B^2-1}\lambda_{ij}\Lambda_i^{(A)}\Lambda_j^{(B)} =\log\rho_{AB} -\log Z_{AB} \mbox{id}_{AB}-\sum_{i=1}^{d_B^2-1}A_{0i}\Lambda_i^{(B)},\\
&\sum_{i=0}^{d^2_B-1}\sum_{j=1}^{d_C^2-1}\eta_{ij}\Lambda_i^{(B)}\Lambda_j^{(C)}=\log\rho_{BC} -\log Z_{BC} \mbox{id}_{BC}-\sum_{i=1}^{d_B^2-1}C_{i0}\Lambda_i^{(B)},
\end{align}
it follows that the two reconstructed states can be written as
\begin{align}
\log \widetilde{\sigma}_{ABC}=&\log\rho_{AB} +\log\rho_{BC}+\left(\eta_{00}-\log Z_{BC}-\log d_C\right)\mbox{id}_{ABC}+\sum_{i=1}^{d_B^2-1}(\eta_{i0}-C_{i0})\Lambda_{i}^{(B)},\\
\log \widetilde{\varrho}_{ABC}=&\log\rho_{AB} +\log\rho_{BC}+\left(\lambda_{00}-\log Z_{AB}-\log d_A\right)\mbox{id}_{ABC}+\sum_{i=1}^{d_B^2-1}(\lambda_{0i}-A_{0i})\Lambda_{i}^{(B)}.
\end{align}
We now define the Hermitian operator $\theta_B$ by the following two equivalent formulas (consequence of $\log\widetilde{\sigma}_{ABC}=\log\widetilde{\varrho}_{ABC}$ ):
\begin{align}
\log\theta_B &:=\left(\eta_{00}-\log Z_{BC}-\log d_C\right)\mbox{id}_{B}+\sum_{i=1}^{d_B^2-1}(\eta_{i0}-C_{i0})\Lambda_{i}^{(B)}\nonumber\\
&=\left(\lambda_{00}-\log Z_{AB}-\log d_A\right)\mbox{id}_{B}+\sum_{i=1}^{d_B^2-1}(\lambda_{0i}-A_{0i})\Lambda_{i}^{(B)}.
\end{align}
Observe that $\theta_{B}$ can, as usual, be extended by tensoring with the relevant identity maps to the whole Hilbert space $\mathcal{H}_{ABC}$. We conclude that $\widetilde{\sigma}_{ABC}=\widetilde{\varrho}_{ABC}$ implies the existence of an Hermitian $\theta_B$ such that
\begin{align}
\label{eq:56}
\log\widetilde{\sigma}_{ABC}=\log\rho_{AB} +\log\rho_{BC} +\log\theta_B=\log\widetilde{\varrho}_{ABC}.
\end{align}
Plugging in $\rho_{ABC}=\widetilde{\sigma}_{ABC}$ and its logarithm as in Eq.~\eqref{eq:56}, in  Lemma~\ref{lemma:2}, we get:
\begin{itemize}
\item[(i)] $\tr_B[\rho_{BC}(\log\rho_B+\log\theta_B)]\succeq 0$,
\vspace{0.15cm}
\item[(ii)] $-\tr_B[\rho_{B}(\log\rho_B+\log\theta_B)]\mbox{id}_{C}\succeq 0$.
\end{itemize}
The second inequality is equivalent to $\tr_B[\rho_{B}(\log\rho_B+\log\theta_B)]\leq 0$. By tracing over $C$ the first inequality we get $\tr_B[\rho_{B}(\log\rho_B+\log\theta_B)]\geq 0$. Therefore,
\begin{align}
\tr_B[\rho_{B}(\log\rho_B+\log\theta_B)]=0.
\end{align}
From this equation, observing that $\rho_{B}=\tr_{AC}(\widetilde{\sigma}_{ABC})$, we can write
\begin{align}
\tr_{B}[\tr_{AC}(\widetilde{\sigma}_{ABC})(\log\rho_B+\log\theta_B)] =\tr_{ABC}[\widetilde{\sigma}_{ABC} \, \mbox{id}_{A} \otimes (\log\rho_{B}+\log\theta_B)\otimes \mbox{id}_C]=0.
\end{align}
Adding and subtracting both $\log(\rho_{AB})\otimes \mbox{id}_C$ and $\mbox{id}_A\otimes \log(\rho_{BC})$, which for simplicity we write without the identity factors, we get
\begin{align}
\tr_{ABC}[\widetilde{\sigma}_{ABC} \, (\log\widetilde{\sigma}_{ABC}+\log\rho_{B} -\log\rho_{AB} -\log\rho_{BC})]&=0,
\end{align}
and this equation is equivalent to $I_{\rho}(A:C|B)=0$, i.e., equivalent to the statement that the SSA inequality holds with equality.
\end{proof}
%%%%%%%%%%%%%%%%%%%%%%%%%%%%%%%%%%%%%%%%%%%%%%%%%%%%%%%%
\section{Best two out of three}
\label{section: best two out of three}
In Section~\ref{section: marginals' compatibility with qmc}, we provided a way of efficiently reconstructing a tripartite quantum state given two of its bipartite marginals subject to the compatibility condition -- Theorem~\ref{th:1} and Theorem~\ref{th:2}. In general, if we are given the three possible bipartite density operators, we still have a residual degree of freedom, namely, the choice of the pair of bipartite density operators from which one will recover the tripartite estimator. We are going to show that the \emph{best} tripartite maximum von Neumann entropy estimator is the one out of three with \emph{minimum} von Neumann entropy -- Theorem \ref{th:5}.
\begin{theorem}
	\label{th:5}
	Let $\rho\in \mathcal{L}(\mathcal{H}_{ABC})$ be an unknown quantum state describing a tripartite quantum system $ABC$. Given the bipartite marginals $\{\rho_{AB},\rho_{BC},\rho_{AC}\}$, we define $\mathcal{C}=\{\rho_{XY},\rho_{YZ}\}$ with $X,Y,Z\in{A,B,C}$ to be one of the three possible pairs of marginals, and $\widetilde{\rho}_{\mathcal{C}}\in\mbox{Comp}(\mathcal{C})$ to be the maximum von Neumann entropy estimator. Then $\widetilde{\rho}$ minimizing the relative entropy with respect to the unknown state $\rho_{ABC}$ is the one with minimum von Neumann entropy
\begin{align}
\label{eq:60}
\widetilde{\rho}=\mbox{\textnormal{arg}}\,\underset{\mathcal{C}}{\mbox{\textnormal{min}}}\,S\left(\widetilde{\rho}_{\mathcal{C}}\right).
\end{align}
\end{theorem}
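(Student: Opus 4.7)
The plan is to establish a Pythagorean-type identity: for each choice $\mathcal{C}$ of two marginals, the relative entropy of the true state $\rho_{ABC}$ with respect to the maximum entropy estimator $\widetilde{\rho}_\mathcal{C}$ decomposes as $S(\widetilde{\rho}_\mathcal{C}) - S(\rho_{ABC})$. Since $S(\rho_{ABC})$ does not depend on $\mathcal{C}$, minimizing the relative entropy is the same as minimizing $S(\widetilde{\rho}_\mathcal{C})$, which is exactly Eq.~\eqref{eq:60}.

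First I would use Proposition~\ref{prop:1} to write, for any choice $\mathcal{C}=\{\rho_{XY},\rho_{YZ}\}$, the logarithm of the estimator in the form
\begin{align*}
\log \widetilde{\rho}_\mathcal{C} = -\log Z_\mathcal{C}\,\mbox{id}_{ABC} + \sum_{k,l}\lambda^{(XY)}_{kl}\Lambda^{(X)}_k\Lambda^{(Y)}_l + \sum_{k,l}\lambda^{(YZ)}_{kl}\Lambda^{(Y)}_k\Lambda^{(Z)}_l,
\end{align*}
i.e.\ as a linear combination of the identity and of operators supported on the two bipartite factors in $\mathcal{C}$. The key observation is that, since $\rho_{ABC}$ is itself compatible with its own marginals (and hence with $\mathcal{C}$), it agrees with $\widetilde{\rho}_\mathcal{C}$ on all expectation values of operators of the form $\Lambda^{(X)}_k\Lambda^{(Y)}_l$ and $\Lambda^{(Y)}_k\Lambda^{(Z)}_l$. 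This immediately yields the crucial identity
\begin{align*}
\tr[\rho_{ABC}\log\widetilde{\rho}_\mathcal{C}] = \tr[\widetilde{\rho}_\mathcal{C}\log\widetilde{\rho}_\mathcal{C}] = -S(\widetilde{\rho}_\mathcal{C}).
\end{align*}

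From here the theorem is a short computation: by the definition of the von Neumann relative entropy,
\begin{align*}
S(\rho_{ABC}\Vert \widetilde{\rho}_\mathcal{C}) = -S(\rho_{ABC}) - \tr[\rho_{ABC}\log\widetilde{\rho}_\mathcal{C}] = S(\widetilde{\rho}_\mathcal{C}) - S(\rho_{ABC}).
\end{align*}
Taking the argmin over the three possible pairs $\mathcal{C}$ and noting that $S(\rho_{ABC})$ is a constant independent of $\mathcal{C}$ gives precisely Eq.~\eqref{eq:60}.

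I do not expect a serious obstacle here; the only point one must be slightly careful about is the use of compatibility of $\rho_{ABC}$ with each candidate $\mathcal{C}$ (which follows tautologically from the fact that the three marginals come from partial traces of the same underlying state) and the observation that the constant $-\log Z_\mathcal{C}$ in $\log\widetilde{\rho}_\mathcal{C}$ contributes the same trace against $\rho_{ABC}$ and against $\widetilde{\rho}_\mathcal{C}$ because both are normalized. If invertibility of $\widetilde{\rho}_\mathcal{C}$ is needed to make sense of $\log\widetilde{\rho}_\mathcal{C}$, it is guaranteed by the exponential form in Proposition~\ref{prop:1}; otherwise one would restrict to the support, as is standard for the relative entropy.
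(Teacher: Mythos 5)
Your proposal is correct and follows essentially the same route as the paper: both write $\log\widetilde{\rho}_{\mathcal{C}}$ as a sum of operators supported on the subsystems whose marginals are fixed by $\mathcal{C}$, use compatibility of $\rho_{ABC}$ and $\widetilde{\rho}_{\mathcal{C}}$ to obtain $\tr[\rho_{ABC}\log\widetilde{\rho}_{\mathcal{C}}]=-S(\widetilde{\rho}_{\mathcal{C}})$, and conclude from $S(\rho_{ABC}||\widetilde{\rho}_{\mathcal{C}})=S(\widetilde{\rho}_{\mathcal{C}})-S(\rho_{ABC})$ since $S(\rho_{ABC})$ is independent of $\mathcal{C}$.
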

\begin{proof}
	By linearity of the trace, the von Neumann relative entropy between $\rho_{ABC}$ and $\widetilde{\rho}_{\mathcal{C}}$:
	\begin{align}
	\label{eq:61}
	S(\rho_{ABC}||\widetilde{\rho}_{\mathcal{C}})=-S(\rho_{ABC}) -\tr[\rho_{ABC}\log\widetilde{\rho}_{\mathcal{C}}].
	\end{align}
	$\widetilde{\rho}_{\mathcal{C}}$  has the form derived in Eq.~\eqref{eq:4}. Then, there exist Hermitian operators $\theta_{XY}\in\mathcal{L}(\mathcal{H}_{XY})$, $\theta_{YZ}\in \mathcal{L}(\mathcal{H}_{YZ})$ and $\theta_{Y}\in\mathcal{L}(\mathcal{H}_Y)$ (naturally extended to act on the joint Hilbert space) such that:
	\begin{align}
	\label{eq:62}
	\log\widetilde{\rho}_{\mathcal{C}}=\theta_{XY}+\theta_{YZ}+\theta_{Y}.
	\end{align} 
	Plugging in ~\eqref{eq:62} in ~\eqref{eq:61} and using the fact that both $\rho_{ABC}$ and $\widetilde{\rho}_{XYZ}$ are in the compatibility set of $\mathcal{C}$, it immediately follows that
		\begin{align}
	\label{eq:63}
	S(\rho_{ABC}||\widetilde{\rho}_{\mathcal{C}})=S\left(\widetilde{\rho}_{\mathcal{C}} \right) -S(\rho_{ABC}).
	\end{align}
	Since the term $S(\rho_{ABC})$ is independent on the choice of $\mathcal{C}$:
	\begin{align}
	\label{eq:64}
	\widetilde{\rho}=\underset{\mathcal{C}}{\mbox{\textnormal{argmin}}}\,S(\rho_{ABC}||\widetilde{\rho}_{\mathcal{C}})=\underset{\mathcal{C}}{\mbox{\textnormal{argmin}}}\,S\left(\widetilde{\rho}_{\mathcal{C}}\right).
	\end{align}
\end{proof}
The estimator we are proposing here is then:
\begin{align}
\label{eq:65}
\widetilde{\rho}=\underset{\mathcal{C}}{\mbox{\textnormal{argmin}}}\,\underset{\rho\in{\small\mbox{Comp}(\mathcal{C})}}{\mbox{\textnormal{max}}}\,S\left(\rho\right).
\end{align}

At this level, the efficiency of the choice of the optimal set of marginals is strictly related to the number of possibilities, which makes the direct generalization to the multipartite scenario inefficient.  As we are going to see in detail in Section~\ref{subsection: the qcl algorithm}, the number of possible choices increases exponentially with the number of variables. The Chow-Liu learning algorithm \cite{cho:liu:68}, solves the problem in the case of probability distributions. The next corollary generalizes the Chou-Liu main argument to QMCs., which will give an hint for the possible generalization of the Chow-Liu algorithm to \emph{Markov quantum trees} (cf. Definition~\ref{def:quantum mf}). Moreover, we are going to see that is sufficient that the compatibility condition with a QMC holds for the estimator obtained via Chow-Liu algorithm to be the optimal one.

\begin{theorem}
	\label{th:6}
	Having a tripartite quantum system $ABC$ described by an unknown $\rho\in \mathcal{L}(\mathcal{H}_{ABC})$ and given the bipartite marginals $\{\rho_{AB},\rho_{BC},\rho_{AC}\}$. If for every pair $\mathcal{C}=\{\rho_{XY},\rho_{YZ}\}$ with $X,Y,Z\in \{A,B,C\}$  there exists a QCM in the order $X-Y-Z$ (cf. Theorem~\ref{th:2}), then the QMC $\widetilde{\rho}_{XYZ}$ minimizing the relative entropy with respect to the unknown state $\rho_{ABC}$ is the one recovered from the pair
	\begin{align}
	\widetilde{\mathcal{C}}=\underset{\mathcal{C}}{\mbox{\textnormal{argmax}}} \Big\{I_\rho(X:Y) + I_\rho(Y:Z)\Big\}.
	\end{align}
\end{theorem}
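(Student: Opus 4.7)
The plan is to reduce the statement to a comparison of von Neumann entropies via Theorem~\ref{th:5}, and then use the QMC hypothesis to turn the entropy of the estimator into a simple combination of marginal entropies, whose minimization is equivalent to maximizing the indicated sum of mutual informations.

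First, by Theorem~\ref{th:5}, the pair $\widetilde{\mathcal{C}}$ whose estimator minimizes the relative entropy with respect to $\rho_{ABC}$ is precisely the one that minimizes the von Neumann entropy of $\widetilde{\rho}_{\mathcal{C}}$, i.e.\ $\widetilde{\mathcal{C}} = \arg\min_{\mathcal{C}} S(\widetilde{\rho}_{\mathcal{C}})$. So the goal becomes identifying this minimizer with the maximizer of $I_\rho(X{:}Y)+I_\rho(Y{:}Z)$.

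Next, I would invoke the QMC hypothesis. By assumption, for each pair $\mathcal{C}=\{\rho_{XY},\rho_{YZ}\}$ there exists a QMC in the order $X-Y-Z$, and Theorem~\ref{th:1} guarantees that $\widetilde{\rho}_{\mathcal{C}}$ is this QMC. Since it is compatible with $\mathcal{C}$, its relevant marginals are $\rho_{XY}$, $\rho_{YZ}$, and $\rho_Y$. Proposition~\ref{prop:2} then yields $I_{\widetilde{\rho}_{\mathcal{C}}}(X{:}Z|Y)=0$, which rearranges to the closed-form expression
\begin{equation*}
S(\widetilde{\rho}_{\mathcal{C}}) \;=\; S(\rho_{XY})+S(\rho_{YZ})-S(\rho_Y).
\end{equation*}

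Finally, I would expand the target quantity using the definition $I_\rho(U{:}V)=S(\rho_U)+S(\rho_V)-S(\rho_{UV})$ and collect terms:
\begin{equation*}
I_\rho(X{:}Y)+I_\rho(Y{:}Z) \;=\; S(\rho_X)+S(\rho_Z)+2S(\rho_Y)-S(\rho_{XY})-S(\rho_{YZ}).
\end{equation*}
The key observation is that across the three admissible choices of $\mathcal{C}$, the triple $\{X,Y,Z\}$ is always a permutation of $\{A,B,C\}$, so $S(\rho_X)+S(\rho_Z)=S(\rho_A)+S(\rho_B)+S(\rho_C)-S(\rho_Y)$. Substituting,
\begin{equation*}
I_\rho(X{:}Y)+I_\rho(Y{:}Z) \;=\; \bigl[S(\rho_A)+S(\rho_B)+S(\rho_C)\bigr] \;+\; \bigl[S(\rho_Y)-S(\rho_{XY})-S(\rho_{YZ})\bigr],
\end{equation*}
where the bracketed constant does not depend on the choice of $\mathcal{C}$, and the remaining term is precisely $-S(\widetilde{\rho}_{\mathcal{C}})$. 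Hence maximizing $I_\rho(X{:}Y)+I_\rho(Y{:}Z)$ coincides with minimizing $S(\widetilde{\rho}_{\mathcal{C}})$, and combined with Theorem~\ref{th:5} this yields the claim.

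No genuine analytic obstacle arises; the only point worth being careful about is the symmetry argument that $\{X,Y,Z\}$ ranges over a permutation of $\{A,B,C\}$, which is what allows $S(\rho_A)+S(\rho_B)+S(\rho_C)$ to drop out as a $\mathcal{C}$-independent constant. The QMC hypothesis is used in exactly one place, namely to apply Proposition~\ref{prop:2} and rewrite $S(\widetilde{\rho}_{\mathcal{C}})$ in terms of the known marginals; without it one would only get an inequality via strong subadditivity and the argmin identification would fail.
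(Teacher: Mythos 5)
Your proposal is correct and follows essentially the same route as the paper: both reduce the relative entropy to $-\bigl[I_\rho(X{:}Y)+I_\rho(Y{:}Z)\bigr]$ plus terms independent of the choice of pair, the only cosmetic difference being that you route through Theorem~\ref{th:5} and the scalar identity $S(\widetilde{\rho}_{\mathcal{C}})=S(\rho_{XY})+S(\rho_{YZ})-S(\rho_Y)$ from Proposition~\ref{prop:2}, whereas the paper substitutes the operator identity of Lemma~\ref{lemma:1} directly into $\tr[\rho_{ABC}\log\widetilde{\rho}_{XYZ}]$ and then adds and subtracts the single-party entropies. The bookkeeping is identical in substance, and your closing remark about where the QMC hypothesis enters matches the paper's discussion following Eq.~\eqref{eq:72}.
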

\begin{proof}
	Observe that:
	\begin{align}
	S(\rho_{ABC}||\widetilde{\rho}_{XYZ})&=\tr\left[\rho_{ABC}\left(\log\rho_{ABC}-\log\widetilde{\rho}_{XYZ}\right)\right]\nonumber\\
	&=-S(\rho_{ABC}) -\tr\left[\rho_{ABC}\left(\log \rho_{XY} +\log \rho_{YZ} -\log \rho_{Y}\right)\right]\nonumber\\
	&=-S(\rho_{ABC}) +S(\rho_{XY})+S(\rho_{YZ})-S(\rho_Y),
	\end{align}
	where in the second line we used Lemma~\ref{lemma:1}. Now, adding and subtracting $S(\rho_{X})$, $S(\rho_{Y})$
	and $S(\rho_Z)$, we immediately get
	\begin{align}
	\label{eq:68}
\!\!\! S(\rho_{ABC}||\widetilde{\rho}_{XYZ})=-\left[I_{\rho}(X:Y)+I_{\rho}(Y:Z)\right]+\!\!\!\!\sum_{W\in \{A,B,C\}}\!\!\!\!S(\rho_{W}) -\!S(\rho_{ABC}).
\end{align} 
	Since the two last terms are independent on the choice of pair of bipartites, we get the desired result.
\end{proof}
Then, in case all the subsets $\mathcal{C}$ are compatible with a QMC, the min-max estimator is the one obtained excluding from the set of marginals the one with lowest quantum mutual information. The compatibility conditions with QMCs are necessary for this result to hold. Indeed, relaxing the compatibility conditions, now we are going to determine an analoguous form to Eq.~\eqref{eq:68} for a general maximum entropy estimator Eq~\eqref{eq:60}. \\

Take Eq~\eqref{eq:62} and add and subtract $S(\rho_{XY})$, $S(\rho_{YZ})$, $S(\rho_Y)$. We then observe that
	\begin{align}
	\textnormal{i)}\ &S(\rho_{XY})+S(\rho_{YZ})-S(\rho_{Y})=-\left[ I_{\rho}(X:Y)+I_{\rho}(Y:Z)\right] +\!\!\!\sum_{W\in \{A,B,C\}}\!\!\! S(\rho_{W}),\\	\textnormal{ii)}\ &\tr\left[\rho_{ABC}\log \widetilde{\rho}_{XYZ}\right]=\tr\left[\rho_{ABC}\left(\theta_{XY}+\theta_{YZ}+\theta_{Y}\right)\right]\nonumber\\
	&=\tr_{XY}\left[\tr_{Z}(\rho_{ABC})\theta_{XY}\right]+\tr_{YZ}\left[\tr_{X}(\rho_{ABC})\theta_{YZ}\right]+\tr_{Y}\left[\tr_{XZ}(\rho_{ABC})\theta_{Y}\right]\nonumber\\
	&=\tr_{XY}\left[\tr_{Z}(\widetilde{\rho}_{XYZ})\theta_{XY}\right]+\tr_{YZ}\left[\tr_{X}(\widetilde{\rho}_{XYZ})\theta_{YZ}\right]+\tr_{Y}\left[\tr_{XZ}(\widetilde{\rho}_{XYZ})\theta_{Y}\right]\nonumber\\
\label{eq:70}
	&=-S(\widetilde{\rho}_{XYZ});\\
\textnormal{iii)}\ &S(\widetilde{\rho}_{XYZ}) -S(\rho_{XY})-S(\rho_{YZ})+S(\rho_Y)= I_{\rho}\left(X:Z|Y \right) 
	\end{align}

Therefore Eq~\eqref{eq:60} can be rewritten as it follows
\begin{align}
\label{eq:72}
S(\rho_{ABC}||\widetilde{\rho}_{XYZ})=&-\left[ I_{\rho}(X:Y)+I_{\rho}(Y:Z)\right]-I_{\rho}\left(X:Z|Y \right) +\sum_{W\in \{A,B,C\}}\!\!\!\! S(\rho_{W})-S(\rho_{ABC}).
\end{align}

Comparing Eq.~\eqref{eq:72} with Eq.~\eqref{eq:68}, we can see that, in general, the choice of the two marginals with maximum mutual informations between the subparts, is not the optimal one for the maximum entropy estimator. Relaxing the compatibility conditions between just one pair with a QMC, the additional term in Eq.\eqref{eq:72}, namely the quantum conditional mutual information of the constructed maximum entropy estimator, $I_{\rho}(X:Z|Y)$, is different from zero. The result of Eq~\eqref{eq:72} does not lead, at first sight, to a simplification of Eq.~\eqref{eq:64}.
%%%%%%%%%%%%%%%%%%%%%%%%%%%%%%%%%%%%%%%%%%%%%%%%%%%%%%%%
\section{The multipartite case}
\label{subsection: the qcl algorithm}
In the previous sections, while considering the tripartite case, we learned that in order to have an algebraic recovery procedure, we need to have a tree structure and an additional constraint regarding the conditional mutual information. It is then natural to suppose that in the general multipartite case one would need a set of additional constraints, which generalize the one obtained previously, and this motivates the following definitions.

\begin{definition}
	\label{def:quantum g}
	(Quantum graph and quantum tree) A quantum graph is a triple $(\mathcal{X},\{\mathcal{H}_{X}\}_{X\in \mathcal{X}}, \rho,\mathcal{G})$, where $\mathcal{X}=\{X_1,..,X_n\}$ labels quantum systems described by the associated Hilbert spaces $\mathcal{H}_{X}$, $X\in \mathcal{X}$, with the $n$-partite composite system described by $\mathcal{H}_{\mathcal{X}}=\mathcal{H}_{X_1}\otimes ...\otimes \mathcal{H}_{X_n}$, $\rho\in\mathcal{L}(\mathcal{H}_{\mathcal{X}})$ is an $n$-partite density operator and $\mathcal{G}=(\mathcal{X}, E)$ is an undirected graph.Whenever the underlying graph of a quantum graph is a tree, we call the structure a quantum tree.
\end{definition}
\begin{definition} 
	Let $\mathcal{G}=(V,E)$ be a undirected graph and let $U$ and $W	$ be non overlapping subsets of $V$. We say that a subset $Z$, disjoint from $U$ and $W$, separates $U$ and $W$ if every path connecting a vertex in $U$ and a vertex in $W$ necessarily overlaps with $Z$. We say that $Z$ is a separator for $U$ and $W$.
\end{definition}
\begin{definition}
	\label{def:quantum mf}
	(Markov quantum field and Markov quantum tree) A Markov quantum field is a quantum graph $(\mathcal{X},\{\mathcal{H}_{X}\}_{X\in \mathcal{X}}, \rho,\mathcal{G})$, where the density operator $\rho$ satisfies the global Markov property: for all $U,W\subset \mathcal{X}$ such that there exists a separator $Z$ for $U$ and $W$ then
	\begin{align}
	\label{CI}
	I_{\rho}(U:W|Z)=0.
	\end{align}  	
Whenever the underlying graph of a Markov quantum field is a tree, we call the structure a Markov quantum tree.
\end{definition}
Definition~\ref{def:quantum mf} includes the most general Markov property, i.e. the global property. Observe that given a Markov quantum field, then any quantum subtree is a quantum Markov tree.

Let $\rho_{\mathcal{X}}\in\mathcal{L}\left(\mathcal{H}_\mathcal{X} \right)$ be an unknown density operator that describes an $n$-partite physical system labelled by $\mathcal{X}$. Let $\mathcal{C}_{\mathcal{T}}=\{\rho_{X_iX_j}, \{X_i,X_j\}\in E(\mathcal{T})\}$ a subset of bipartite marginals, graphically representable by one of its spanning trees $\mathcal{T}$. The quantum state $\widetilde{\rho}_{\mathcal{C}_{\mathcal{T}}}$ that maximizes the von Neumann entropy under the constraints of compatibility with the marginals in $\mathcal{C}_{\mathcal{T}}$ has the form derived in Eq.~\eqref{eq: merho for 2}. Then, there exists a set of Hermitian operators $\{\theta_{X_i}\in\mathcal{L}(\mathcal{H}_{X_i}), X_i\in V(\mathcal{T})\}$ and $\{\theta_{X_iX_j}\in\mathcal{L}(\mathcal{H}_{X_iX_j}), \{X_i,X_j\}\in E(\mathcal{T})\}$, naturally extended to act on the joint Hilbert space, such that:
	\begin{align}
	\label{eq:75}
	\log\widetilde{\rho}_{\mathcal{C}_{\mathcal{T}}}=\sum_{\{X_i,X_j\}\in E(\mathcal{T})}\theta_{X_iX_j}+\sum_{i=1}^{n} \theta_{X_i},
	\end{align}  
where $\deg X_i$ is the \emph{degree of the node} $X_i$, i.e. the number of edges linked to the node. If the quantum tree is a Markov quantum tree, then we have,
	\begin{align}
\label{eq:76}
\log\widetilde{\rho}_{\mathcal{C}_{\mathcal{T}}}=\sum_{\{X_i,X_j\}\in E(\mathcal{T})}\log\rho_{X_iX_j}-\sum_{i=1}^{n}\left( \deg X_i-1\right)\log\rho_{X_i},
\end{align} 

The combinatorial factor is obtained by considering the spanning Markov quantum tree $\mathcal{T}$ over $\mathcal{X}$ as a tripartite one on $ABC$ with $A=X_i\in \mathcal{X}$, $B=X_j\in\mathcal{X}$, $C=\mathcal{X}\backslash\{X_i,X_j\}$, where $X_i$ and $X_j$ are chosen such that there are no edges between $X_i$ and any vertices in $\mathcal{X}\backslash\{X_i,X_j\}$, i.e., $A-B-C$ is a Markov quantum tree, for which we can use Lemma~\ref{lemma:1}. 
%	\begin{align}
%\label{eq:78}
%\log\widetilde{\rho}_{\mathcal{C}_T}=\theta_{X_iX_j}+\theta_{\mathcal{X}\backslash X_i}+\theta_{X_i},
%\end{align}  
Then, consider the Markov quantum subtree $\mathcal{X}\backslash\{X_i,X_j\}$ as a tripartite quantum tree $A'-B'-C'$  with $A=X_k\in \mathcal{X}\backslash X_i$, $B=X_l\in\mathcal{X}\backslash X_i$, $C=\mathcal{X}\backslash\{X_i,X_k,X_l\}$, apply Lemma~\ref{lemma:1}, and iterate the procedure until the remaining subgraph is bipartite. It is then clear that each vertex comes with a factor of its degree minus one.

The construction above allows for an algebraic recovery of the state by iteratively applying the Petz recovery map.
 In the following paragraph, we show that the global Markov condition results in an efficient choice of the best tree.

\paragraph{Choosing the best tree} Given an unknown $n$-partite quantum system, its bipartite correlations can be represented by a complete graph. Its number of possible spanning trees, i.e., tree subgraphs which include all vertices in the graph, is given by Cayley's formula~\cite{cay:89}, $n^{n-2}$, which grows exponentially with the number of vertices. Because of this, we can not choose the best tree efficiently. In the classical scenario, one possible solution is to use the Chow-Liu algorithm~\cite{cho:liu:68}.

Recall that the Chow-Liu algorithm, cf. Appendix, provides an efficient way to find the optimal tree minimizing the Kullback-Leibler divergence between the actual probability distribution, $p(X_1=x_1,...,X_n=x_n)$, and a probability distribution associated to a spanning tree, $p^{\mathcal{T}}(X_1=x_1,...,X_n=x_n)$. In~\cite{cho:liu:68}, it is shown that the Kullback-Leibler divergence can be written as:
\begin{align}
\label{eq:73}
D_{K}(p, p^{\mathcal{T}}) =-\sum_{\{X_i,X_j\}\in E(\mathcal{T})}I(X_i,X_j)	 +\sum_{i=1}^{n}H(X_i)-H(X_1,...,X_n),
\end{align}
where $H(\cdot)$ is the Shannon entropy and $I(X_i,X_j)$ is the classical mutual information between $X_i$ and $X_j$.
The only term that depends on the choice of tree $\mathcal{T}=(V(\mathcal{T})=\mathcal{X},E(\mathcal{T}))$ is the first one, therefore, minimizing the Kullback-Leibler divergence is equivalent to maximizing
\begin{align}
\sum_{\{X_i,X_j\}\in E(\mathcal{T})}I(X_i,X_j),
\end{align}
to which problem the Chow-Liu algorithm provides an efficient solution.

The relative entropy  between the unknown density operator and one of its quantum tree maximum entropy estimators can be written as
\begin{align}
\label{eq:78}
S\left(\rho_{\mathcal{X}}||\widetilde{\rho}_{\mathcal{C}_\mathcal{T}}\right)= -S\left(\rho_{\mathcal{X}}\right) +\tr\left(\rho_{\mathcal{X}}\log\widetilde{\rho}_{\mathcal{C}_{\mathcal{T}}} \right)=S\left(\widetilde{\rho}_{\mathcal{C}_{\mathcal{T}}}\right)-S\left(\rho_{\mathcal{X}}\right).
\end{align}
Where have used the compatibility conditions to perform a calculation similar to that of Eq.~\eqref{eq:70}. It then follows that Eq.~\eqref{eq:65} still holds in the $n$-partite scenario. An alternative form for Eq.~\eqref{eq:78}, which generalizes the one obtained by Chow and Liu for probability distributions Eq.~\eqref{eq:73}, is now derived. Adding and subtracting to Eq.~\eqref{eq:78} the terms $\sum_{\{X_i,X_j\}\in E(\mathcal{T})}S\left( \rho_{X_iX_j}\right)$ and $\sum_{i=1}^{n}\left( \textnormal{deg}X_i-1\right) S\left( \rho_{X_i}\right)$, observing that
\begin{align}
\sum_{\{X_i,X_j\}\in E(\mathcal{T})}S\left( \rho_{X_iX_j}\right)-\sum_{i=1}^{n}\left( \textnormal{deg}X_i-1\right) S\left( \rho_{X_i}\right)=-\sum_{\{X_i,X_j\}\in E(\mathcal{T})}I_\rho(X_i,X_j)+\sum_{i=1}^{n}S\left( \rho_{X_i}\right),
\end{align}
and setting
\begin{align}
\Delta S(\widetilde{\rho}_{\mathcal{C}_{\mathcal{T}}}):=\sum_{\{X_i,X_j\}\in E(\mathcal{T})}S\left( \rho_{X_iX_j}\right)-\sum_{i=1}^{n}\left( \textnormal{deg}X_i-1\right) S\left( \rho_{X_i}\right) -S(\widetilde{\rho}_{\mathcal{C}_{\mathcal{T}}}),
\end{align}
Eq.~\eqref{eq:78} assumes the form
\begin{align}
&S\left(\rho_{\mathcal{X}}||\widetilde{\rho}_{\mathcal{C}_{\mathcal{T}}}\right)=-\sum_{\{X_i,X_j\}\in E(\mathcal{T})}I_\rho(X_i,X_j)-\Delta S(\widetilde{\rho}_{\mathcal{C}_{\mathcal{T}}})+\sum_{i=1}^{n}S\left( \rho_{X_i}\right) -S\left(\rho_{\mathcal{X}} \right).
\end{align}
When $\Delta S (\widetilde{\rho}_{\mathcal{C}})=0$, again the best tree is the one that maximizes the term
\begin{align}
\sum_{\{X_i,X_j\}\in E(\mathcal{T})}I_\rho(X_i,X_j).
\end{align}
This last problem can be efficiently solved using the Chow-Liu algorithm where the mutual information on the bipartite subparts is replaced by its quantum generalization.

Next, we are going to study the conditions the correlations inside the given system have to  satisfy in order to have $\Delta S (\widetilde{\rho}_{\mathcal{C}_{\mathcal{T}}})=0$. Consider the following iterative construction. Since $\mathcal{T}$ is a tree, there exists a leaf, i.e., a vertex with degree $1$ call it $X_{l_{1}}\in\mathcal{X}$ and denote by $\mbox{ad}(X_{l_1})$ its adjacent vertex.	 Set $\mathcal{T}_1=(V_1,E_1):=\mathcal{T}$. Now consider the chain $X_{l_1} - \mbox{ad}(X_{l_1})- V_{1}\backslash\{X_{l_1},\mbox{ad}(X_{l_1})\}$. The associated quantum conditional mutual information reads
\begin{align}
I_{\rho}(X_{l_1}: V_1\backslash\{X_{l_1},\mbox{ad}(X_{l_1})\}|\mbox{ad}(X_{l_1}))= S\left(\rho_{X_{l_1}\mbox{ad}(X_{l_1})}\right)+S\left(\rho_{V_1\backslash\{X_{l_1}\}}\right) -S\left(\rho_{\mbox{ad}(X_{l_1})}\right) -S(\rho_{V_1}).
\label{eq:83}
\end{align}
Observe that $\rho_{V_1}=\widetilde{\rho}_{\mathcal{C}_{\mathcal{T}}}$. Set $\mathcal{T}_{2}:=(V_2,E_2)$, where $V_2=V_1\backslash \{X_{l_1}\}$ and $E_2$ is obtained naturally from $E_1$ by dropping $\{X_{l_1},\mbox{ad}(X_{l_1})\}$. It is trivial to see that $\mathcal{T}_2$ is a tree and, thus, we can find a leaf $X_{l_2}\in V_2$. Consider now the chain $X_{l_2} - \mbox{ad}(X_{l_2})- V_{2}\backslash\{X_{l_2},\mbox{ad}(X_{l_2})\}$. The associated quantum conditional mutual information reads
\begin{align}
I_{\rho}(X_{l_2}: V_2\backslash\{X_{l_2},\mbox{ad}(X_{l_2})\}|\mbox{ad}(X_{l_2}))= S\left(\rho_{X_{l_{2}}\mbox{ad}(X_{l_{2}})}\right)+S\left(\rho_{V_2\backslash\{X_{l_{2}}\}}\right) -S\left(\rho_{\mbox{ad}(X_{l_{2}})}\right) -S(\rho_{V_2}).
\label{eq:84}
\end{align}
It is now clear that $S\left(\rho_{V_1\backslash\{X_{l_1}\}}\right)$ in Eq.~\eqref{eq:83} cancels with $S(\rho_{V_2})$ upon summing the two equations. Iteratively, we can build a tree $\mathcal{T}_{i+1}=(V_{i+1},E_{i+1})$ from a tree $\mathcal{T}_{i}=(V_i,E_i)$  with a chosen leaf $X_{l_{i}}$, by setting $V_{i+1}=V_{i}\backslash\{X_{l_i}\}$ and dropping the edge $\{X_{l_i},\mbox{ad}(X_{l_i})\}$ from $E_i$ to obtain $E_{i+1}$. This construction can be performed until one obtains the last chain $X_{l_{n-2}}$
\begin{align}
&\sum_{i=1}^{n-2}I_{\rho}(X_{l_i}: V_i\backslash\{X_{l_i},\mbox{ad}(X_{l_i})\}|\mbox{ad}(X_{l_i}))\nonumber\\
&= \sum_{i=1}^{n-2} \left[S\left(\rho_{X_{l_i}\mbox{ad}(X_{l_i})}\right)-S\left(\rho_{\mbox{ad}(X_{l_i})}\right)\right] +S\left(\rho_{V_{n-2}\backslash\{X_{l_{n-2}}\}}\right)-S(\rho_{V_1}) \nonumber\\
&= \sum_{\{X_i,X_j\}\in E(\mathcal{T})} S(\rho_{X_iX_j})-\sum_{i=1}^{n}(\deg(X_i)-1)S(\rho_{X_i}) -S(\widetilde{\rho}_{\mathcal{C}_{\mathcal{T}}}),
\end{align}
where we noticed that $V_{n-2}\backslash\{X_{l_{n-2}}\}$ is the last edge missing in the sum on the second line and also that $\bigcup_{i=1}^{n-2}\{\mbox{ad}(X_{i})\}$ is $\mathcal{X}$ except two vertices that have degree one. Therefore, we have
\begin{align}
\Delta S(\widetilde{\rho}_{\mathcal{C}_{\mathcal{T}}}) =\sum_{i=1}^{n-2}I_{\rho}(X_{l_i}: V_i\backslash\{X_{l_i},\mbox{ad}(X_{l_i})\}|\mbox{ad}(X_{l_i})).
\end{align} 
This equation shows that $\Delta S(\widetilde{\rho}_{\mathcal{C}_{\mathcal{T}}})\geq 0$, because each term in the sum is non-negative. To proceed, we will apply the chain rule for quantum conditional mutual information:
\begin{align}
I_{\rho}(A:C_{1}\dots C_{n}| B)=& \sum_{j=1}^{n} I_{\rho}(A:C_{j}|B C_{1}\dots C_{j-1}),
\end{align} 
where the first term in the sum is defined to be $I_{\rho}(A:C_1|B)$. Motivated by the order appearing in the Chain rule, we introduce an order in $V_i\backslash\{ X_{l_i},\mbox{ad}(X_{l_i})\}=\{X_{w_{i}(j)}\}_{j=1}^{n-i}$, so that
\begin{align}
\Delta S(\widetilde{\rho}_{\mathcal{C}_{\mathcal{T}}}) &=\sum_{i=1}^{n-2} I_{\rho}(X_{l_i}: X_{w_{i}(1)}\dots X_{w_i(n-i)}|\mbox{ad}(X_{l_i}))\nonumber \\
&=\sum_{i=1}^{n-2}\sum_{j=1}^{n-i} I_{\rho}(X_{l_i}: X_{w_{i}(j)}|\mbox{ad}(X_{l_i})X_{w_i(1)}\dots X_{w_i(j-1)}).
\end{align}
To have $\Delta S(\widetilde{\rho}_{\mathcal{C}_{\mathcal{T}}}) =0 $ is equivalent to having each term in the sum zero.
\begin{proposition}
\label{prop:3}
The quantity $\Delta S (\widetilde{\rho}_{\mathcal{C}_{\mathcal{T}}})$ is always non-negative. It is zero iff
\begin{align}
I_{\rho}(X_{l_i}: X_{w_{i}(j)}|\mbox{ad}(X_{l_i})X_{w_i(1)}\dots X_{w_i(j-1)})=0, & \text{ for all } j=1,...,n-i \text{ and } i=1,...,n-2.
\end{align}
\hfill\squareforqed
\end{proposition}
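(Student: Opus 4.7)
The plan is to observe that virtually all the work has been done in the construction immediately preceding the statement; what remains is to combine non-negativity of each summand with the fact that a finite sum of non-negative terms vanishes iff each term vanishes. Concretely, I would first recall the identity derived in the paragraphs above, namely
\begin{align*}
\Delta S(\widetilde{\rho}_{\mathcal{C}_{\mathcal{T}}}) = \sum_{i=1}^{n-2} I_{\rho}\bigl(X_{l_i}: V_i\backslash\{X_{l_i},\mbox{ad}(X_{l_i})\}\,\big|\,\mbox{ad}(X_{l_i})\bigr),
\end{align*}
obtained via the iterative leaf-stripping procedure on the spanning tree $\mathcal{T}$, followed by the expansion through the chain rule for quantum conditional mutual information,
\begin{align*}
\Delta S(\widetilde{\rho}_{\mathcal{C}_{\mathcal{T}}}) = \sum_{i=1}^{n-2}\sum_{j=1}^{n-i} I_{\rho}\bigl(X_{l_i}: X_{w_{i}(j)}\,\big|\,\mbox{ad}(X_{l_i})X_{w_i(1)}\dots X_{w_i(j-1)}\bigr),
\end{align*}
with the ordering $V_i\backslash\{X_{l_i},\mbox{ad}(X_{l_i})\}=\{X_{w_i(j)}\}_{j=1}^{n-i}$ fixed as in the discussion.

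Next, I would invoke the strong subadditivity of the von Neumann entropy (Lieb--Ruskai), which guarantees that every quantum conditional mutual information is non-negative: each summand of the double sum above satisfies $I_{\rho}(\cdot : \cdot \,|\, \cdot)\geq 0$. Hence $\Delta S(\widetilde{\rho}_{\mathcal{C}_{\mathcal{T}}})\geq 0$, which is the first claim.

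For the second claim, since $\Delta S(\widetilde{\rho}_{\mathcal{C}_{\mathcal{T}}})$ is a finite sum of non-negative real numbers, it equals zero if and only if every summand is zero. This yields exactly the condition
\begin{align*}
I_{\rho}\bigl(X_{l_i}: X_{w_{i}(j)}\,\big|\,\mbox{ad}(X_{l_i})X_{w_i(1)}\dots X_{w_i(j-1)}\bigr)=0,\quad \text{for all } j=1,\dots,n-i \text{ and } i=1,\dots,n-2,
\end{align*}
which is the stated iff condition. There is no genuine obstacle here: all the structural content, in particular the telescoping cancellation of entropies $S(\rho_{V_i})$ across successive leaf-removals and the algebraic rearrangement into the $\Delta S$ form, has already been verified in the preceding construction; the proposition is simply the combination of that identity with SSA and the elementary fact about sums of non-negative numbers.
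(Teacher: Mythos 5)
Your proposal is correct and follows essentially the same route as the paper: the identity $\Delta S(\widetilde{\rho}_{\mathcal{C}_{\mathcal{T}}})=\sum_{i}\sum_{j}I_{\rho}(X_{l_i}:X_{w_i(j)}|\mbox{ad}(X_{l_i})X_{w_i(1)}\dots X_{w_i(j-1)})$ obtained from the leaf-stripping telescoping plus the chain rule, combined with strong subadditivity and the fact that a finite sum of non-negative terms vanishes iff every term does. The paper treats the proposition as an immediate consequence of exactly that preceding construction, so nothing is missing from your argument.
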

We conclude that if we are given a Markov quantum field, then the choice of the best tree is efficient and the recovery procedure is algebraic. Notice that the set of conditions obtained in Prop.~\ref{prop:3} is polynomial, namely $\mbox{O}(n^2)$, and these are in general weaker than the global Markov property. This can perhaps be used as a hint towards relaxing the computationally demanding verification of the global Markov property on the provided marginals. 
                                                                                                              
%%%%%%%%%%%%%%%%%%%%%%%%%%%%%%%%%%%%%%%%%%%%%%%%%%%%%%%%
\section{Conclusions and outlook}
\label{section: conclusions and outlook}
In this manuscript, we proposed a way to compress a subset of density operators according to a generalization to the quantum realm of the Jaynes' max entropy principle, given a chosen set of partial information. Focusing on the tripartite case, with access to two bipartite marginals, we provided a necessary and sufficient algebraic condition for compatibility with a QMC. The recovery procedure through the Petz map is algebraic and efficient. The recovery procedure goes as follows: 
\begin{itemize}
	\item[1.] Measure the three bipartite marginals $\rho_{AB},\rho_{BC},\rho_{AC}$;
	\item[2.] Check for every couple $\rho_{XY}, \rho_{YZ}$ with $X,Y,Z\in \{A,B,C\}$ the compatibility condition with a quantum Markov chain $X-Y-Z$ (Theorem~\ref{th:2});
	\item[3.] If for all the three couples of marginals the compatibility holds, compute the quantum mutual information $I(X:Y)$ and discard the $\rho_{XY}$ with minimum $I(X:Y)$;
	\item[4.] From the two remaining marginals, via Petz recovery map, construct the min-max tripartite estimator (Theorem~\ref{th:1}).
\end{itemize}

In Theorem~\ref{th:6}, we provided a new characterisation of QMCs in terms of a commutative diagram of quantum Bayesian updating processes. This hints on a possible category-theoretical characterization of QMCs, which requires further investigation.

Through the notion of a Markov quantum tree, we were able to generalize the Chow-Liu algorithm to density operators. In fact, the results of this manuscript indicate that the classical theory of learning probability distributions via maximum entropy estimation can be extended naturally to Markov quantum trees.

We speculate that, due to the additional term in Eq.~\eqref{eq:72} (see the paragraph after), it might be possible to extend our results to approximate quantum Markov chains (cf. Ref~\cite{sut:18}), however, one would have to understand if it is possible to have an efficient compatibility condition, i.e., the analogue of Theorem~\ref{th:2} for the case of approximate quantum Markov chains. 

We would like to be able to relax the global Markov condition in Definition~\ref{def:quantum mf} and the result of Proposition~\ref{prop:3} provides a hint that such relaxation might be possible. This is desirable because, in general, the global Markov property seems computationally demanding, for a classical computer, to be verified. Another approach, would be to understand if a quantum computer can learn an even wider class of density operators efficiently.

Another possible direction towards extending the space of learnable quantum states is to enlarge the Hilbert space by an ancilla, which would allow to have a QMC, thus, efficiently learnable, however, this ancilla would have to be subject to certain conditions in order not to end up with a state which would be far from the unknown state in the relative entropy sense.

\section*{Acknowledgements}

This work was supported by European funds namely via H2020 project SPARTA, national funds through FCT, Funda\c{c}\~ao para a Ci\^{e}ncia e a Tecnologia, under contract IT (UID/EEA/50008/2019), PREDICT (PTDC/CCI-CIF/29877/2017), project QuantMining (POCI-01-0145-FEDER-031826) and internal IT projects QBigData and RAPID. Serena Di Giorgio also acknowledges the FCT PhD grant PD/BD/114332/2016 and the DP-PMI FCT programme.

%%%%%%%%%%%%%%%%%%%%%%%%%%%%%%%%%%%%%%%%%%%%%%%%%%%%%%%%
\section*{Appendix: The Chow-Liu algorithm}
\label{appendix: Chow-Liu algorithm}
Given a set of random variables $V=\{X_1,...,X_n\}$ for which we have access to the bipartite correlations described by the joint probability distributions $\{p_{X_iX_j}\}$ we can build a weighted complete graph $\mathcal{G}$ whose vertices label the random variables and the edges correspond to the bipartite probability distributions, $p_{X_iX_j}$, weighted by the classical mutual informations $I(X_i,X_j)$, given by 
\begin{align*}
I(X_i,X_j)=\sum_{x_i,x_j}p_{X_iX_j}(x_i,x_j)\log \frac{p_{X_i}(x_i)p_{X_j}(x_j)}{p_{X_iX_j}(x_i,x_j)}.
\end{align*}
The Chow-Liu algorithm allows us to efficiently construct the maximum weighted spanning tree. Explicitly, sort the values $\{I(X_i,X_j)=I_{\alpha}\}_{\alpha=1}^{N=\frac{1}{2}n(n-1)}$ in descending order, $I_{1}\geq I_{2}\geq .... \geq I_{N}$, then the algorithm proceeds to build a tree $\mathcal{T}$ iteratively as follows:
\begin{itemize}

\item[] \textnormal{[Initialization]} $\ \mathcal{G}_0=(V,E_0)$ where $V=\{X_1,...,X_n\}$ and $E_0=\emptyset$.
\item[] \textnormal{[Iterative Step]} Let $\{X_i,X_j\}$ be the pair associated to $\alpha\in \{1,...,N\}$. Build a graph $\mathcal{G}_{\alpha}=(V,E_{\alpha-1})$, where $E_{\alpha}$ is obtained as follows
\begin{align*}
E_{\alpha}=\begin{cases} 
E_{\alpha-1}\cup \{\{X_i,X_j\}\}, \text{ if }\mathcal{G}_{\alpha}=(V,E_{\alpha}) \text{ is a tree},\\
E_{\alpha-1}, \text{ otherwise.}
\end{cases}
\end{align*}

\end{itemize}
The graph $\mathcal{G}_{N}=\mathcal{T}$ is the desired maximum weighted tree.

\bibliographystyle{unsrt}
\bibliography{bib}
\end{document}